\renewcommand{\@biblabel}[1]{\quad#1.}
\definecolor{Gray}{gray}{.25}
\newtheorem{thm}{Theorem}[section]
\theoremstyle{definition}
\newtheorem{dfn}{Definition}[section]
\theoremstyle{remark}
\theoremstyle{plain}
\newtheorem{lem}[thm]{Lemma}
\newtheorem{col}[thm]{Corollary}
\begin{document}

\begin{flushleft}
{\Large
\textbf\newline{Deterministic Logarithmic Completeness in the Distributed Sleeping Model \footnote{This work was supported by the Open University of Israel research fund.} }
}
\newline
\\
Leonid Barenboim \textsuperscript{$a$},
Tzalik Maimon \textsuperscript{$b$}
\\
\bigskip
{\bf $^a$ The Open University of Israel.} Email: leonidb@openu.ac.il
\\
{\bf $^b$ Ben-Gurion University of The Negev, Israel.} Email: tzalik@post.bgu.ac.il
\\
\end{flushleft}

\begin{abstract}
In this paper we provide a deterministic scheme for solving any decidable problem in the distributed {\em sleeping model}. The sleeping model \cite{KPSY11,CGP20} is a generalization of the standard message-passing model, with an additional capability of network nodes to enter a sleeping state occasionally. As long as a vertex is in the awake state, it is similar to the standard message-passing setting.
However, when a vertex is asleep it cannot receive or send messages in the network nor can it perform internal computations. On the other hand, sleeping rounds do not count towards {\em awake complexity.} Awake complexity is the main complexity measurement in this setting, which is the number of awake rounds a vertex spends during an execution. In this paper we devise algorithms with worst-case guarantees on the awake complexity. 

We devise a deterministic scheme with awake complexity of $O(\log n)$ for solving any decidable problem in this model by constructing a structure we call {\em Distributed Layered Tree}. This structure turns out to be very powerful in the sleeping model, since it allows one to collect the entire graph information within a constant number of awake rounds. Moreover, we  prove that our general technique cannot be improved in this model, by showing that the construction of distributed layered trees itself requires $\Omega(\log n)$ awake rounds. This is obtained by a reduction from message-complexity lower bounds, which is of independent interest.
Furthermore, our scheme also works in the $\mathcal{CONGEST}$ setting where we are limited to messages of size at most $O(\log n)$ bits. This result is shown for a certain class of problems, which contains problems of great interest in the research of the distributed setting. Examples for problems we can solve under this limitation are leader election, computing exact number of edges and average degree. 

Another result we obtain in this work is a deterministic scheme for solving any problem from a class of problems, denoted {\bf O-LOCAL}, in $O(\log \Delta + \log^*n)$ awake rounds. This class contains various well-studied problems, such as MIS and $(\Delta+1)$-vertex-coloring. Our main structure in this case is a tree as well, but is sharply different from a distributed layered tree. In particular, it is constructed in the local memory of each processor, rather than distributively. Nevertheless, it provides an efficient synchronization scheme for problems of the {\bf O-LOCAL} class.
\end{abstract}


\section{Introduction}

What can be computed within logarithmic complexity has been one of the most fundamental questions in distributed and parallel computing, since the initiation of the study of parallel algorithms in the eighties. Various problems were shown to belong to the NC class back then, i.e., the class of problems that can be solved in polylogarithmic time by a polynomial number of machines. This includes several fundamental problems, namely, $(\Delta + 1)$-coloring, Maximal Independent Set and Maximal Matching. All of these problems admit deterministic logarithmic parallel algorithms. In the distributed setting, however, these problems turned out to be much more challenging, if one aims at a deterministic solution. The first deterministic polylogarithmic solution was an $O(\log^7 n)$ time algorithm for the problem of Maximal Matching, obtained by Hanckowiak, Karonski and Panconesi \cite{HKP98}. More recently, polylogarithmic deterministic $(\Delta^{1 + \epsilon})$-coloring was obtained by
Barenboim and Elkin \cite{BE11}. The problem of $(2\Delta - 1)$-edge-coloring was provided with an $O(\log^7 \Delta \log n)$-rounds deterministic algorithm by Fischer, Ghaffari and Kuhn \cite{FGK17}. Recently, a plethora of results were published in this field, with various improvements to the aforementioned algorithms. See, e.g., \cite{BEM17, BKO20, Fi20, K20}, and references therein.
In a very recent breakthrough, a wide class of problems have been solved using deterministic polylogarithmic number of rounds \cite{RG20}, including $(\Delta + 1)$-coloring and Maximal Independent Set. This was achieved by providing an efficient algorithm for the $(O(\log n), O(\log n))$-Network-Decomposition problem, which is complete in this class.

\subsection{Our Results}

In the current paper we investigate yet another distributed setting,  namely, the {\em Sleeping Setting}. Several variants of this setting have attracted the attention of researchers recently \cite{BT19, CGP20, DHW98, F20, GKKPS08, KPSY11}. The particular setting and complexity measure we consider in this paper were introduced by Chatterjee, Gmyr and Pandurangan \cite{CGP20} in PODC'20. This sleeping setting is similar to the standard distributed $\mathcal{LOCAL}$ setting \cite{L87}, but has an additional capability, as follows. In the sleeping setting, the vertices of the network graph can decide in each round to be in one of two states; a "sleep" state or an "awake" state. If all the vertices are awake all the time, the setting is identical to the standard $\mathcal{LOCAL}$ setting. However, the capability of entering a "sleep" state is where a vertex cannot receive or send messages in the network, neither can it perform internal computations. Consequently, such rounds do not consume the resources of that vertex, and shall not be counted towards a complexity measurement that aims at optimizing resource consumption. Indeed, in this setting a main complexity measurement takes into account only awake rounds. Specifically, the worst-case awake complexity of an algorithm in the sleeping setting is the worst-case number of rounds in which any single vertex is awake.
In PODC'20, Chatterjee, Gmyr and Pandurangan \cite{CGP20} presented a Maximal Independent Set randomized algorithm with expected awake complexity of $O(1)$. Its high-probability awake complexity is $O(\log n)$, and its worst-case awake complexity is polylogarithmic. 

This work raised the following two important questions: \\ 
(1) Can MIS be solved within deterministic logarithmic awake complexity? \\
(2) Can additional problems be solved within such complexity? \\

\noindent In the current paper we answer these questions in the affirmative. But much more generally, we show that {\em any decidable problem} can be solved within deterministic logarithmic awake complexity in the distributed sleeping setting. Namely, a decidable problem is any computational problem that has a sequential deterministic algorithm that provides a correct solution within a finite sequential running time (as large as one wishes).

Note that undecidable problems in the sequential setting are also undecidable in the different variants of distributed settings.

For the purpose of solving decidable problems, we present a new structure, namely, a {\em Distributed Layered Tree} (DLT). We show that if one is able to compute a distributed layered tree, then any decidable problem can be solved within additional awake complexity of $O(1)$. This is because a DLT allows each vertex to obtain all the information of the input graph in a constant number of awake rounds, and then any decidable problem can be solved locally and consistently by all vertices using a sequential algorithm. We also prove that DLT itself can be solved in $O(\log n)$ awake rounds. In particular, this provides a deterministic logarithmic solution to the fundamental Broadcast problem. This improves the best previously-known awake complexity of this problem in the sleeping setting, due to Chang et al. \cite{CDHHLP18}, by at least a quadratic factor. We note that the broadcast algorithm of Chang et al. was devised for settings with additional requirements, i.e., it is more general than Broadcast in the sleeping setting. Nevertheless, it was still the state-of-the-art even in the sleeping model. Our improvement applies specifically to the sleeping model. 

A natural question is how difficult the construction of DLT is.
We answer this
by proving a lower bound of $\Omega(\log n)$ awake rounds for the DLT problem. This lower bound is obtained by a simple but powerful tool of a reduction from {\em message complexity} lower bounds in the $\mathcal{LOCAL}$ model. With this lower bound, given that the DLT problem itself is a decidable problem, we obtain a tight deterministic bound of $\Theta(\log n)$ worst-case awake time on the class of decidable DLT-hard problems \footnote[1]{A {\em DLT-hard} problem is a problem whose solution provides a DLT within additional $o(\log n)$ awake rounds.} in the sleeping model.

An additional direction of ours is the analysis of a class we define as the {\bf O-LOCAL} class. This is a class of problems that can be solved using an acyclic orientation of the edges, by choosing a solution for each vertex after all vertices reachable from it have computed their solution, and as a function of these solutions. A notable example is $(\Delta + 1)$-coloring, where each color can be selected once all neighbors on outgoing edges have selected their own colors, such that the color does not conflict with any of them. Another example is MIS, where each decision is made after all outgoing neighbors have made their decisions. We show that any problem that belongs to this class has deterministic worst-case awake complexity of $O(\log \Delta + \log^* n)$. 

In addition to the number of awake rounds, which is the main complexity measurement in this setting, we are also interested in optimizing the overall number of communication rounds. Since the DLT can be used to solve any decidable problem, it follows that certain such problems require $\Omega(n)$ communication rounds. (These are the global problems of the ordinary distributed setting. For example, the leader election is such a problem.) We investigate how close we can get to this lower bound with an algorithm of $\tilde{O}(n)$ rounds for the distributed layered tree problem. While our basic algorithm requires $O(n^2 \log n)$ communication rounds, a more sophisticated version requires only $O(n \log n \log^*n)$ communication rounds. This comes at a price of increasing the worst-case awake complexity, but only by a factor of $O(\log^* n)$.

\subsection{The Sleeping Setting} 

The sleeping setting represents the need for energy-efficient algorithms in ad hoc, wireless and sensor networks. In such networks, the energy consumption depends on the amount of time a vertex is actively communicating or performing calculations. More importantly, significant energy is spent even if a node is idle, but awake. It was shown in experiments that the energy consumption in an idle state is only slightly smaller than when a node is active \cite{FN01, ZK05}. This is in contrast to the sleeping state, in which energy consumption is decreased drastically. Thus, if a node may enter a sleeping mode to save energy during the course of an algorithm, one can significantly improve the energy consumption of the network during the execution of an algorithm. 

The sleeping model is a formulation of this premise, and is a generalization of the traditional $\mathcal{LOCAL}$ model. In the sleeping model, similarly to the $\mathcal{LOCAL}$ model, a communication network is represented by  an $n$-vertex graph $G = (V,E)$, where vertices represent processors, and edges represent communication links. There is a global clock, and computation proceeds in synchronous discrete rounds. In each round a vertex can be in either of the two states, "sleep" or "awake". (While in the $\mathcal{LOCAL}$ model the vertices are only in the "awake" state.) If a vertex is in the "awake" state in a certain round, it can perform local computations as well as sending and receiving messages in that round. On the other hand, in a round of a "sleep" state, a vertex cannot send or receive messages, and messages sent to it by other vertices are lost. It also cannot perform any internal computations. A vertex decides a-priori about entering a "sleeping" state. That is, in order to enter a sleeping state in a certain round $k$, either the vertex decides about it in an awake round $k' < k$, or such a decision is hard-coded in the algorithm, and is known before its execution. Nodes in the "sleep" state consume almost no energy, and thus shall not be counted towards the energy efficiency analysis. 

Initially, vertices know the number of vertices $n$, or an upper bound $\hat{n} \geq n$. The IDs of vertices are unique and belong to the set $[\hat{n}]$. Even if $\hat{n} >> n$, the awake complexity of our algorithms is not affected at all. The overall number of clock rounds, however, may be affected. In this case $n$ should be replaced by $\hat{n}$ in the clock complexity bounds. In some of our algorithms, however, the dependency on $n$ and $\hat{n}$ can be made as mild as the log-star function. See Section 4. 
\\

\noindent {\bf The main efficiency measures in the Sleeping Model}. The measurements for the performance of an algorithm in the sleeping model were first mentioned by Chatterjee, Gmyr and Pandurangan in \cite{CGP20}. For a distributed algorithm with input graph $G = (V, E)$ in the sleeping model, two types of complexity measurements are defined. One is {\em node-averaged awake complexity} in which, for a node $v \in V(G)$, define $a(v)$ as the number of rounds $v$ spends in the "awake" state until the end of the algorithm. The node-average awake complexity is the average $\frac{1}{n} \sum_{v \in V(G)} a(v)$. \\
The second efficiency measurement is the {\em worst-case awake complexity}. This is defined as $\max_{v \in V(G)} a(v)$ and is a stronger requirement than the node-averaged awake complexity. In this paper we focus entirely on the worst-case efficiency measurement.

\subsection{Our Techniques}

\subsubsection{Upper Bound}

Our main technical tool is the construction of a {\em Distributed Layered Tree}. (We denote it shortly by {\em DLT}.) A DLT is a rooted tree where the vertices are labeled, such that each vertex has a greater label than that of its parent, according to a given order. Moreover, in a DLT each vertex knows its own label and the label of its parent. This knowledge of the label of the parent is not trivial in the sleeping model since passing this information between parent and child requires both of them to be in an awake state. Therefore, this knowledge and hierarchy of IDs throughout the tree makes DLTs are very powerful structures in the sleeping setting. Indeed, once such a tree is computed, the information of the entire graph can be learned by all vertices within $O(1)$ awake complexity, as follows. For a non-root vertex $v \in V(G)$, let $L(v)$ and $L(p(v))$ be the labels of $v$ and the parent of $v$ in the DLT, respectively. Each non-root vertex $v \in V(G)$ is awake only in rounds $L(p(v))$ and $L(v)$. The root $r$ awakes only in round $L(r)$. This way the root is able to perform a broadcast to all the vertices of the tree. Each vertex $v \in V(G)$ receives the information from the root in round $L(p(v))$ (this information has arrived to the parent of $v$ in an earlier stage) and passes it to its children in round $L(v)$. Indeed, in this round $v$ and all its children are awake. In a similar way, a convergecast in the DLT can be performed. We choose some label $n'$ which is greater than all vertex labels in the DLT. Each vertex $v \in V(G)$ awakens in rounds $n' - L(p(v))$ and $n' - L(v)$. This way, information from the leaves propagates towards the root. In round $n' - L(v)$ a vertex $v$ receives information from all its children, and in a later stage, in round $n' - L(p(v))$, the vertex forwards the information to its parent. Note that indeed $n' - L(v) < n' - L(p(v))$, since $L(v) > L(p(v))$, according to the definition of a DLT. A formal proof for the running time of the broadcast and convergecast procedures in a DLT can be found in Lemma \ref{lem:DLTprop} in Section 2. 

Thus, it becomes possible to perform broadcast and convergecast in the tree within $4$ awake rounds per vertex. A broadcast and convergecast in a tree allows each vertex to obtain the entire information stored in the tree. Since the tree spans the input graph, the entire information of the graph is obtained. Then, any decidable problem can be solved using the same deterministic algorithm internally in all vertices of the graph. Finally, each vertex deduces its part in the solution. This execution that is performed internally, does not require any additional rounds of distributed communication and is considered as a single awake round in terms of the sleeping model. To summarize this discussion, a DLT makes it possible to solve any decidable problem within 5 awake rounds per vertex. 

The ability to solve any decidable problem within a constant awake complexity suggests that the computation of a DLT is an ultimate goal in the sleeping setting. Thus establishing the efficiency of this construction is of great interest. We construct a DLT within $O(\log n)$ awake rounds as follows. We begin with $n$ singleton DLTs, where each vertex of the input graph is a DLT in a trivial way. Then, we perform $O(\log n)$ connection phases in which the trees are merged. Each phase requires at most $O(1)$ awake rounds from each vertex. The number of DLT trees in each phase is at least halved.
After $O(\log n)$ phases, a single tree remains. This DLT contains all the vertices of the input graph. Thus, it is the DLT of the entire input graph $G$. 

The high-level idea of our algorithm is somewhat similar to the celebrated algorithm of Gallager, Humblet and Spira for minimum spanning trees \cite{GHS83}. But the construction is fundamentally different. While GHS finds an {\em existing} subgraph that is an MST, our technique gradually builds trees that contain {\em new data}. These are new ID assignments that make it possible to have progress with trees formation. In each iteration trees are merged and IDs are reassigned, until a single DLT of the entire input graph is achieved. This tree has the desired IDs, according to the definition of the DLT, as a result of the ID recomputation made in each iteration.

\subsubsection{Lower Bound}

Once we establish an upper bound on the awake complexity for constructing DLT, we turn to examining lower bounds. We note that an ordinary lower bound technique may not work for the sleeping setting. This is because the standard techniques in the distributed setting deal with what information can be obtained within a certain number of rounds. That is, within $r$ rounds, for an integer $r$, each vertex can learn its $r$-hop neighborhood. Then arguments of indistinguishably of views are used. (That is, vertices that must make distinct decisions are unable to do that, if their $r$-hop-neighborhoods are identical. In this case, $r$ rounds are not sufficient to solve a certain problem.) However, such arguments do not work in the sleeping setting. Indeed, within $O(1)$ awake rounds the entire graph can be learned on certain occasions. Thus, algorithms with $r$ awake rounds are not limited to obtaining knowledge of $r$-hop-neighborhoods. 

As a consequence of the latter phenomenon, we investigate alternative ways to prove lower bounds. We introduce a quite powerful technique that allows one to transfer lower bounds on message complexity into lower bounds for rounds in the sleeping setting. Indeed, if $t n$ messages must be sent in a ring network to solve a certain problem, for an integer $t > 1$, then $t/2$ awake rounds are required for any algorithm that solves the problem in the sleeping setting. Otherwise, if $t' < t/2$ awake rounds are possible, all the messages in each round can be concatenated, and thus each vertex sends up to $t'$ messages to each of its two neighbors in the ring, during the $t'$ rounds it is awake. The number of messages per vertex becomes $2t' < t$, and the overall number of messages passed is thus smaller than $t n$, which contradicts the assumption that at least $t n$ messages must be sent. A formal proof for this claim can be found in Lemma \ref{lem:minMSGs} in Section 3.

We employ this idea with the known lower bound of $\Omega(n \log n)$ for message complexity of leader election in rings \cite{FL87}. Since a DLT allows, in particular, to solve leader election, we deduce that $\Omega(\log n)$ awake rounds are required. Otherwise, it would be possible to solve leader election in rings within fewer than $\Theta(n \log n)$ messages. We note that the lower bound of \cite{FL87} holds for a given number of rounds assuming the IDs are sufficiently large. Our upper bounds on awake complexity, on the other hand, do not rely on the range of IDs, but only on the number of vertices in the graph. No matter how large the IDs are, on an $n$-vertex graph the awake complexity for constructing a DLT is $O(\log n)$. The overall number of clock rounds (awake and asleep) do depend on the range of identifiers, but the dependency can be made as low as the log-star function, by using the coloring algorithm of Linial \cite{L87}. Our algorithm is applicable also with proper coloring of components, not necessarily with distinct IDs. Consequently, for any given number of clock rounds (awake and asleep), which upper bounds the ordinary running time of an algorithm, there exists a sufficiently large range of IDs, such that the awake complexity $O(\log n)$ of our algorithm is tight.

\subsubsection{Improved Upper-Bound for O-LOCAL problems}  \label{subsec:introOLOCAL}

O-LOCAL problems are those that can be solved sequentially, according to an acyclic orientation provided with the input graph, such that each vertex decision is made after all vertices emanating from it have made their own decisions, and as a function of these decisions. (Note that directing edges from endpoints of smaller IDs to larger IDs provides such an orientation.) For these kind of problems, we employ a technique that is quite different from that of a DLT, and obtain awake complexity of $O(\log \Delta + \log^* n)$. We still employ a tree construction, but this time it is more sophisticated than a DLT. On the other hand, it is constructed internally by each vertex, and is the same in all vertices. The algorithm starts by a distributed computation of an $O(\Delta^2)$-coloring of the input graph within $O(\log^* n)$ time. Then, each vertex constructs internally a binary search tree, whose leaves are the possible $O(\Delta^2)$-colors. (The colors are not consecutive, and inner nodes have integer values between the values of the colors.) Next, each vertex decides to wake-up in the rounds whose numbers appear in the path from the leaf of their color and the root. We prove that one of these rounds occurs after all neighbors of smaller colors have made their decisions. Moreover, by that round these vertices have communicated their decision to the vertex. Consequently, it may compute its own decision. Since the depth of the tree is $O(\log \Delta)$, this requires only $O(\log \Delta)$ awake-rounds per vertex.

\subsection{Related Work}

The distributed $\mathcal{LOCAL}$ model was formalized by Linial in his seminal paper \cite{L87} from 1987. This paper also provides a deterministic $O(\Delta^2)$-coloring algorithm with $O(\log^* n)$ round-complexity, as well as a matching lower bound. Since then, a plethora of distributed graph algorithms has been obtained in numerous works. See, e.g., the survey of \cite{S13} and references therein.

The sleeping setting has been intensively studied in the area of Computer Networks \cite{DHHV05, PXW09, TGAAA17, WYHD19}. In Distributed Computing the problem of broadcast in sleeping radio networks was studied by King, Phillips, Saya and Young \cite{KPSY11}. The problem of clock synchronization in networks with sleeping processors and variable initial wake-up times was studied by Bradonjic, Kohler and Ostrovsky \cite{BKO12}, and by Barenboim, Dolev and Ostrovsky \cite{BDO14}. A special type of the sleeping model, in which processors are initially awake, and eventually enter a permanent sleeping state, was formalized by Feuilloley \cite{F17,F20}. An important efficiency measurement in this setting is the {\em vertex-averaged} awake complexity. This setting was further studied by Barenboim and Tzur \cite{BT19}, who obtained various symmetry-breaking algorithms with improved vertex-averaged complexity.

The awake complexity of various problems has been also studied in radio models of general graphs (rather than unit disk graphs). In particular, several important results were achieved by Chang et al. in PODC'18 \cite{CDHHLP18}. That work considered Broadcast and related problems in several radio models, that can be seen as the sleeping model with additional restrictions. Specifically, in the model that is the closest to the sleeping model, the vertices are able to either transmit or listen in an awake round, but not both. (In other words, this is a half-duplex communication model, while the sleeping model is full-duplex.) There are also even more restricted models studied in \cite{CDHHLP18}, in which vertices cannot receive messages from  multiple neighbors in parallel.

Since the results of Broadcast \cite{CDHHLP18} are applicable to the sleeping model, and they are the state-of-the-art even in this model that we consider in the current paper, a comparison between them and our results is in place. The Broadcast algorithm of \cite{CDHHLP18} with the best deterministic awake complexity has efficiency $O(\log n \cdot \log N)$, where $N \geq n$ is the largest identifier. Our results, on the other hand, provide a deterministic Broadcast algorithm in the sleeping setting with awake complexity of $O(\log n)$. This is at least a quadratic improvement in the sleeping setting. We stress that our awake complexity is not affected by the size of identifiers, and remains $O(\log n)$, no matter how large $N$ is. The Broadcast algorithm of \cite{CDHHLP18} constructs trees that partition the graph into layers, but these trees are very different from our DLTs, both in their structure and in the techniques for achieving them. Specifically, in \cite{CDHHLP18} each vertex in layer $i = 1,2,...$ in the tree has a neighbor of layer $i - 1$. On the other hand, a DLT does not necessarily have this property. (This is because layer $i$ in a DLT consists of all vertices labeled by $i$ in the tree, which are not necessarily at distance $i$ from the root.) In addition, the tree construction in \cite{CDHHLP18} is based on ruling sets, while our techniques are considerably different.

The class {\bf O-LOCAL} of problems that we mentioned in Section \ref{subsec:introOLOCAL} is inspired by the class {\bf P-SLOCAL} which was first defined by Ghaffari, Kuhn and Maus \cite{GKM17}. This class consists of all problems that can be solved as follows. Given an acyclic orientation, the output of each vertex is determined sequentially, according to the orientation, after vertices on outgoing edges have made their decisions. The decision of each vertex is based on information from a polylogarithmic radius around it. The {\bf O-LOCAL} class is similar to the {\bf P-SLOCAL} class, except that instead of examining a polylogarithmic- radius neighborhood around a vertex, its neighbors on outgoing edges are examined. (And, more generally, all vertices on consistently oriented paths emanating from a vertex are inspected.)

\section{Distributed Layered Trees} \label{sec:DLTmain}

In this section we describe our method with which one can solve any decidable distributed problem in the sleeping model. We describe the construction of a certain kind of a spanning tree, called a {\em Distributed Layered Tree}, defined as follows. Each vertex $v$ in the tree is labeled with a label $L(v)$. These labels must have some predefined order, such that they can be mapped to natural numbers. 
The labeling is such that the label of each vertex, besides the root, is larger than the label of its parent, and each vertex knows the label of its parent. These two requirements of the spanning tree allow us to perform broadcast across the spanning tree in a fashion where each vertex is awake for exactly 2 rounds. This is also true for a convergecast procedure. Consequently, given such a tree, the root can learn the entire input graph, compute a solution for any decidable problem internally, and broadcast it to all vertices, all within a constant number of awake rounds. This is done in the following way. For a broadcast procedure, we start with a message from the root of the tree through the tree, where each vertex $v$ is awake for just 2 rounds. Namely, $v$ awakes in round $L(v)$ and round $L(p(v))$, where $p(v)$ is the parent of $v$ in the spanning tree. In other rounds $v$ is asleep. This ensures that a message sent from the root will propagate in the tree and eventually arrive to all vertices of the graph while having each vertex awaken exactly twice. In the same manner we perform the convergecast procedure, where each child $v$ sends its message to its parent at the round $n' - L(p(v))$ for some $n' \geq n$. Again we have each vertex $v$ active for only two rounds, specifically, $n' - L(v)$ and $n' - L(p(v))$. 

Using this method one can have the root of the spanning tree compute the solution for any decidable problem $\mathcal{P}$ deterministically and broadcast this solution back through the tree to all the vertices in the input graph $G$ with $O(1)$ worst-case complexity in the sleeping model. Therefore, our main goal is obtaining a distributed sleeping algorithm for computing such a layered tree. We begin with a formal definition of notations and the definition of a Distributed Layered Tree. Note that we refer here to lexicographic order. For sake of definition we do not limit or define this lexicographic order since it does not matter for the definition of a DLT. One can build a DLT with any order that can be mapped to natural numbers. We define a lexicographic order that serves our purposes in Section 2.1.

\noindent {\bf Vertex Label $L(v)$.} The label of a vertex $v$ is denoted by $L(v)$. The labels are taken from a range of labels which has a lexicographic order. \\
\noindent {\bf Tree Label $L(\mathcal{T})$.} The label of a tree $\mathcal{T}$ is denoted by $L(\mathcal{T})$. The labels are taken from a range of labels which has a lexicographic order. The label of a tree is defined to be the label of the root of the tree. Hence, that label can always be found in the memory of the root.

\begin{dfn}
{\bf A Distributed Layered Tree (DLT).} A DLT is a rooted oriented labeled spanning tree with two properties, with respect to some lexicographical order:
\begin{enumerate}
    \item For each vertex $v$ and a parent $p(v)$, the label of $v$ is greater than the label of $p(v)$ in the lexicographical order.
    \item $v$ has knowledge of the label of $p(v)$.
\end{enumerate}
\end{dfn}

\noindent As we show in the next lemma, DLTs are useful for distributing data across the graph in an efficient way.

\begin{lem} \label{lem:DLTprop}
Given a DLT, the procedures of broadcast and convergecast across the entire tree take exactly 2 rounds each in the sleeping model.
\end{lem}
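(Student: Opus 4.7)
The plan is to verify the wake-up schedules sketched in the paragraph preceding the lemma: choose a schedule based on the labels $L(\cdot)$ that causes each non-root vertex to be awake in exactly two rounds per procedure (and the root in one), and argue correctness by induction along the tree.

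First I would handle broadcast. Let each non-root vertex $v$ wake exactly in rounds $L(p(v))$ and $L(v)$, and let the root $r$ wake only in round $L(r)$. By the second clause of the DLT definition, $v$ stores both $L(v)$ and $L(p(v))$, so this schedule is computable from purely local data and no auxiliary awake round is needed to agree on it. I would then prove by induction on the distance $d(v,r)$ that the broadcast message is delivered to $v$ in round $L(p(v))$, with the root as base case. The inductive step invokes two facts: (i) by the first clause of the DLT definition, $L(p(v)) < L(v)$, so the parent $p(v)$ first listens in its earlier awake round $L(p(p(v)))$ and later transmits in its awake round $L(p(v))$; (ii) both $p(v)$ and $v$ are scheduled to be simultaneously awake in round $L(p(v))$, so the message actually crosses the edge. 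A symmetric argument then shows that $v$ relays to each of its children in round $L(v)$, when they are scheduled to listen.

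Next I would mirror the schedule for convergecast. Picking an integer $n'$ strictly greater than every label in the tree (known in advance from the ID range), I let each non-root $v$ wake in rounds $n'-L(v)$ and $n'-L(p(v))$, and the root in $n'-L(r)$. Because $L(v) > L(p(v))$ gives $n'-L(v) < n'-L(p(v))$, each $v$ now first meets all of its children in round $n'-L(v)$ and only later meets its parent in round $n'-L(p(v))$. An induction on the height of the subtree rooted at $v$ then shows that by round $n'-L(v)$ the vertex has received the aggregated messages from its whole subtree, and by round $n'-L(p(v))$ it has forwarded them upward. Counting wake-ups, each non-root vertex is awake in exactly two (distinct, since $L(v)\neq L(p(v))$) rounds per procedure, and the root in one.

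The main obstacle I anticipate is not the induction, which is essentially a routine counting argument once the schedule is fixed, but rather the justification that the schedules can be installed without additional awake-round cost. That reduces entirely to the observation that the two pieces of data $L(v)$ and $L(p(v))$ driving the schedule at $v$ are exactly what the DLT definition guarantees $v$ already stores, so the wake-up pattern is determined offline and no coordination round is incurred. Once this is noted, the lemma follows.
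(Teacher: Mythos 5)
Your proposal is correct and follows essentially the same route as the paper: the identical wake-up schedules ($L(p(v))$, $L(v)$ for broadcast and $n'-L(v)$, $n'-L(p(v))$ for convergecast), with correctness resting on the same two observations that $L(p(v)) < L(v)$ orders each vertex's listening round before its transmitting round and that parent and child are simultaneously awake in the round indexed by the parent's label. Your explicit inductions on depth and subtree height, and your remark that the schedule is computable offline from the data the DLT definition already guarantees each vertex stores, merely make precise what the paper's proof states more tersely.
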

\begin{proof} \textbf{Broadcast.} Each vertex $v$ is awake in rounds $L(v)$ and $L(p(v))$. As a vertex $v$ awakes in round $L(v)$ and broadcasts a message, each of its children $u$ in the tree are awake in round $L(p(u)) = L(v)$ and thus can receive the message from its parent. Therefore, a message sent by the root propagates throughout the tree until it reaches all leaves. \\
\textbf{Convergecast.} Let $n'$ be an integer, such that $n' > L(v)$, for all $v \in V$, and $n'$ is known to all vertices. Each vertex $v$ is awake in rounds $n' - L(v)$ and $n' - L(p(v))$. If a child $v$ has a message to pass to its parent $p(v)$, it awaits round $n' - L(p(v))$ and then $v$ sends the message to $p(v)$. In that round $p(v)$ is awake and ready to receive the messages from all its children. Each vertex $v$ already has knowledge of the subtree rooted at it, since for each vertex $u$ in the subtree in which $v$ is the root, $L(u) > L(v) > L(p(v))$ and thus the round $n' - L(u)$ comes before the round $n' - L(p(v))$. Thus the message propagates up the tree all the way to the root.
\end{proof}

\noindent We note that in the proof above we require for an integer $n'$ to be larger than all labels of all vertices in the tree. Since the labels are required to be taken from a range with lexicographic order, and we have knowledge of the size of this range, $n'$ can be chosen appropriately. 
We describe the lexicographic order in Section \ref{sec:conPhase}, as well as describing how each vertex has knowledge of ranges from which labels are selected.

\subsection{The Connection Phases} \label{sec:conPhase}

Our algorithm for constructing a DLT starts with a graph where each vertex is considered to be a singleton tree. The initial label of each such tree $\mathcal{T}_v$, \ $v \in V$, is determined by the ID of its vertex $v$ as follows. $L(\mathcal{T}_v) = ID(v)$. The vertex label is set as $L(v) = \langle ID(v) , 0 \rangle$. (Two coordinates are used, since trees are going to be merged and have many vertices. Then the left-hand coordinate is going to be the same for all vertices in a tree, while the right-hand coordinates may differ. Also, distinct trees will have distinct left-hand coordinate.) Each of these $n$ singleton trees is a DLT in a trivial way. Our goal is merging these trees in stages, so that eventually a single DLT remains that spans the entire input graph.
Assume we have a forest of DLTs. Initially, we have a forest of $n$ single-vertex trees. During the connection phases, we enforce a rule regarding the representation of the labels of the vertices.
Let $\mathcal{T}$ be a tree in our forest. The DLT label of $\mathcal{T}$, $L(\mathcal{T})$ is an integer number. The label of each vertex $v \in \mathcal{T}$ is set as $\langle L(\mathcal{T}), l \rangle$, where $l$ is a number assigned to $v$, as described in Section \ref{subsec:stage1}. In what follows we describe the ordering of the vertex labels. The left coordinate is considered to be the more significant one among the two. That is, $\langle a, b \rangle < \langle c,d \rangle$ iff $a < c$ or ($a = c$ and $b < d$). Note that the requirements on labels hold. That is, the labels have an ordering and the root of the tree can deduce the tree label from its own label by referring to the first coordinate. Next, we describe how our algorithm produces connections between the trees. We do this in two stages.

\subsubsection{Connection Stage One - Several DLTs into a single DLT}  \label{subsec:stage1}

In this stage our goal, for each tree $\mathcal{T}$, is finding an edge $(u,w)$ that connects $\mathcal{T}$ to a neighbor DLT $\hat{\mathcal{T}}$, such that $L(\mathcal{T}) > L(\hat{\mathcal{T}})$. In this sense, $u \in \mathcal{T}$ and $w \in \hat{\mathcal{T}}$. Using this edge we connect $\mathcal{T}$ with $\hat{\mathcal{T}}$, such that $w$ becomes the parent of $u$. In the case that $\mathcal{T}$ is a single vertex $v$, we simply choose the neighbor of $v$ in $G$ with a label smaller than that of $v$. We note that there may be a case that no such edge is found, since $\mathcal{T}$ is a DLT with local minimum label. We handle such a case in Section \ref{subsec:conStage2}. If $\mathcal{T}$ contains more than one vertex, we perform a convergecast from all the neighbors of all vertices in $\mathcal{T}$ to the root of $\mathcal{T}$. Consequently, the root of $\mathcal{T}$ learns the structure of $\mathcal{T}$ and the set of edges that connect $\mathcal{T}$ with other trees. The root chooses the edge $(u, w)$ which connects $\mathcal{T}$ to a vertex $w \in \hat{\mathcal{T}}$, where $\hat{\mathcal{T}} \neq \mathcal{T}$ is a neighboring DLT of $\mathcal{T}$. As explained above, the choice is made such that $L(\hat{\mathcal{T}}) < L(\mathcal{T})$. Recall that at this point $L(\hat{\mathcal{T}})$ appears in the first coordinate of all the labels of the vertices in $\hat{\mathcal{T}}$, and hence the root of $\mathcal{T}$ has knowledge of all the labels of its neighboring DLTs. 

Internally, the root of $\mathcal{T}$ calculates a new label arrangement of the vertices of $\mathcal{T}$, such that the vertex $u$ that was chosen above (that is connected to $w \in \hat{\mathcal{T}}$) becomes the new root, and $\mathcal{T}$ remains a DLT, under the new label arrangement. This requires a new orientation of $\mathcal{T}$. This new oriented tree is denoted $\mathcal{T}'$. The label arrangement of $\mathcal{T}'$ is obtained in the following way. Note that the DLT label of $\mathcal{T}$ is $L(\mathcal{T})$. The new root $u$ sets its label to be $\langle L(\mathcal{T}), 0 \rangle$. The rest of the vertices in $\mathcal{T}'$ are assigned labels in the following way. Each vertex $v$ is assigned the label $\langle L(\mathcal{T}), l \rangle$ where $l$ is the distance of $v$ from $u$ in $\mathcal{T}'$. Specifically, $l$ is the level of $v$ in $\mathcal{T}'$. It follows that each level of $\mathcal{T}'$ has labels smaller than the labels in the next level of the tree. Once this internal computation is done, $r_{\mathcal{T}}$ sends the resulting label arrangement in a broadcast procedure to all vertices in $\mathcal{T}'$. Note that $u$ becomes the new root, instead of $r_{\mathcal{T}}$, from now until further notice. 

But now we are posed with the problem that neighbors of the vertices of the tree $\mathcal{T}'$ do not know the labels of their neighbors, which is required for the next connection phase as well as the second requirement for a DLT. This is solved by awakening the entire graph for one round. Each vertex sends its knowledge to all of its neighbors while receiving the knowledge from all of its neighbors in the same round. Then all vertices of the graph switch to asleep status. The round in which this happens is determined by all vertices before the beginning of the execution. Once all vertices determine the starting time of such a phase, they all wake up after $c n'$ rounds, for a constant $c$, such that $c n'$ bounds from above the execution time of this phase. In the following lemma we prove that this connection procedure connects several DLTs into one DLT.

\begin{lem} \label{lem:isDLT}
Let $C$ be a connected component in our graph produced at this stage. Then $C$ is a DLT.
\end{lem}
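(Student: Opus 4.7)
My plan is to decompose $C$ into its constituent pre-stage DLTs $\mathcal{T}_1,\ldots,\mathcal{T}_k$ (those whose vertex sets lie in $C$) together with the cross-edges that were chosen in this stage and have both endpoints in $C$. I would view each $\mathcal{T}_j$ as a meta-vertex and each chosen cross-edge as a directed meta-edge pointing from the emitting tree to the smaller-labeled neighbor tree, and then verify the two DLT requirements edge by edge.

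First, to see that $C$ is a spanning tree, I would run a simple edge count. By construction every meta-vertex has meta-out-degree at most $1$, and every meta-edge strictly decreases $L(\mathcal{T})$, so the meta-graph cannot contain a meta-cycle and is globally a forest. Connectedness of $C$ in the ambient graph forces its induced meta-graph to be connected (any path in $C$ projects to a meta-walk), hence a meta-tree on $k$ meta-vertices with exactly $k-1$ cross-edges; combined with the $\sum_{j=1}^{k}(|V(\mathcal{T}_j)|-1)=|V(C)|-k$ internal edges contributed by the re-oriented DLTs, this gives $|V(C)|-1$ edges on $|V(C)|$ connected vertices, so $C$ is a tree. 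The unique vertex without a parent pointer is then the old root of the meta-root $\mathcal{T}_*$: following meta-edges toward $\mathcal{T}_*$ decreases labels strictly, so $\mathcal{T}_*$ is the smallest-labeled tree in $C$, and by this minimality it has no smaller neighbor in $G$, does not emit an edge, and is therefore not re-rooted in this stage, so its root inherited from the previous stage becomes the root of $C$.

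Next I would verify the lex-order and parent-knowledge conditions edge by edge. Within each re-oriented $\mathcal{T}'_j$ the new labels $\langle L(\mathcal{T}_j),l\rangle$, with $l$ the depth from the new root $u_j$, increase strictly with depth, so each child is lex-larger than its parent. For a cross-edge $(u_j,w_j)$ with $w_j\in\hat{\mathcal{T}}$, we have $L(u_j)=\langle L(\mathcal{T}_j),0\rangle$ and $L(w_j)=\langle L(\hat{\mathcal{T}}),\cdot\rangle$ with $L(\hat{\mathcal{T}})<L(\mathcal{T}_j)$ by the selection rule, so $w_j$ is lex-smaller and is a valid parent of $u_j$. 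Parent-label knowledge follows because each re-oriented $\mathcal{T}'_j$ has broadcast its new labels down from its old root, while each $u_j$ learns $L(w_j)$ during the single network-wide wake-up round that concludes the stage.

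The main subtle point I expect to wrestle with is cycle-freeness: internal edges persist (merely re-oriented), while many cross-edges are chosen in parallel across the graph, so one must rule out that their combination accidentally closes a cycle even though each individual $\mathcal{T}'_j$ remains a tree. The meta-graph view above reduces this worry to the one-line observation that strict label decrease along meta-edges forbids meta-cycles; once that is in place, the edge count, the lex-ordering, and the propagation of parent labels are routine bookkeeping.
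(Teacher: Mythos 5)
Your proposal is correct and follows essentially the same route as the paper's proof: acyclicity from the strict decrease of labels along oriented edges, identification of the root with the old root of the minimum-labeled constituent tree, a two-case check of the child-greater-than-parent condition (internal re-labeled edges versus chosen cross-edges), and the network-wide wake-up round for parent-label knowledge. The only difference is cosmetic --- you establish tree-ness via an explicit meta-graph contraction and an edge count, whereas the paper argues directly that consistently oriented cycles are impossible and each vertex has a single parent; the contraction view you use is in fact the one the paper adopts for the analogous lemma covering the full connection phase in Section 2.2.
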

\begin{proof}
First, $C$ is a tree since each edge in $C$ is oriented from a vertex with a high label to a vertex with a lower label. Thus consistently oriented cycles are not possible. Moreover, each vertex has a single parent, the one which it chose to connect to. Thus cycles are not possible at all. (A cycle is either consistently oriented or with a vertex that selects two neighbors in the cycle.) Furthermore, there is a tree $T_m$ with the minimal DLT label among all DLTs that compose $C$. This root did not choose any edge to connect through since the root label is a local minimum. This local minimum label appears in the first coordinate of all the labels of its vertices making them local minima in their respective neighborhoods. Thus, the root of $T_m$ is the root of the tree $C$.

Secondly, we made sure that each child has a label greater than the label of its parent by the way we chose the connections. Let $v,u$ be two neighbors in $C$ such that $u$ is the parent of $v$. If $u$ and $v$ originated from the same tree $T$ at the beginning of the connection phase then the root of $T$ assigned them new labels in the connection phase such that $L(u) < L(v)$. Otherwise, $u \in \hat{\mathcal{T}}, v \in \mathcal{T}$, and $v$ becomes the root of its original tree and sets its own label as $\langle L(\mathcal{T}), 0 \rangle$. Since we chose $\hat{\mathcal{T}}$ such that $L(\hat{\mathcal{T}}) < L(\mathcal{T})$ we have that $L(u) < L(v)$ in this case as well. 

Thirdly, by awakening all the vertices in the graph for one round and sending labels to 1-hop neighborhoods, we made sure that vertices has knowledge of labels in their 1-hop neighborhood (including their respective parents in $C$).\\
Thus, $C$ is a tree, where each child has a label greater than the label of its parent and each child knows the label of its parent. Thus, by definition, $C$ is a DLT.
\end{proof}

We would like to note here that at the end of each connection phase each DLT $C''$ in the new forest has a distinct DLT label. This is due to the fact that the DLT label of $C''$ is set from one of the labels of the trees composing $C''$. If each DLT had a distinct label at the start of the connection phase, it is clear we preserved this property also once the connection phase is done. Thus, it is also true in the start of the next connection phase. Note that we start the algorithm with each DLT having a distinct DLT label since we set those labels according to the IDs of the vertices of $G$.

\subsubsection{Connection Stage Two - Connecting Local Minimums} \label{subsec:conStage2}

We now make a second connection step as part of the connection phase. The motivation in this step is that there might be trees where the DLT labels are local minima and thus those trees did not connect to any other tree. They might also not have been chosen by any other tree. If such is the case, these trees made it through stage one of the connection phase without connecting to any other component. 

We would like to at least halve the number of trees in each phase and have at most $O(\log n)$ connection phases, we connect these local-minimum label trees to other components. Let $T_m$ be such a local-minimum label tree. First, we would like to verify that $T_m$ indeed has no connections to other components from the previous stage. If some other component has selected $T_m$ earlier, we no longer need to handle it, even though $T_m$ is a minimum label tree. \\
We only need to handle trees with no connection to other trees. We check this by performing a convergecast in each connected component sending signals from the leaves to the root. If $r_{T_m}$ received a signal from vertices that previously did not belong to $T_m$ then $T_m$ is not a problematic tree and was chosen by another tree in the first stage. This signal can be, for example, the label of a leaf. Since the label contains the root of the tree to which the leaf belongs (before the trees are merged), then $r_{T_m}$ can deduce that the leaf was not part of $T_m$ before the first stage of the connection phase. In this case we do nothing with $T_m$. 

The other option is where $T_m$ was not connected to any other component. In this case, we add such a connection. To this end, $r_{T_m}$ chooses an edge $(v, u)$ to connect to a DLT $C$ arbitrarily. Since it is the only edge connecting it to other DLTs, the result is a tree. A simple move now would be to make $C$ a subtree of the DLT $T_m$. But we note that $T_m$ might not be the only tree that arbitrarily chose to connect to $C$. Since there can be more than one local minimum DLT that chooses $C$, we need to make $C$ the parent of all those DLTs which chose to connect to it. Otherwise, $C$ will become the child-DLT of more than one DLT which breaks the structure of a directed tree we aspire for. \\
Therefore, in order for the result to be a DLT we need to make $T_m$ the sub-DLT of $C$ instead. That is, given that $(v, u)$ is the edge connecting $T_m$ to $C$, we turn $v$ into the root of $T_m$ and make $v$ the child of $u$ in $C \cup T_m$. Doing so poses a problem since $L(v) < L(u)$ and this violates the requirements for a DLT in the resulting component. We solve this by waking up the entire graph for a single round and have $v$ and $u$ exchange information. 

After this round, the information about the local-minimum label trees that asked to join $C$ is located in vertices of the component $C$. This information is then delivered to the root of $C$, $r_C$ by a convergecast procedure. $r_C$ performs label reassignment in the same way as in Section \ref{subsec:stage1}. Specifically, a single BFS is computed for all vertices in $C$ and the local minimum label trees $T_m$ that connected to $C$. Then reassignment of these labels is made according to the levels of the BFS tree. This results in labels of the from $\langle ID(r_C), l \rangle$, such that the first coordinate is the same for all vertices in the connected component and $l$ is the level of the vertex in $C$. Note that the structure of the labels and the label arrangement is the same as described in Section \ref{subsec:stage1}. Thus, the proof of Lemma \ref{lem:isDLT} applies here for the new connected component and its labeling. Therefore, $C$ is a DLT. This completes the description of connection stages. Their steps are summarized in Algorithm \ref{alg:connection}. In what follows, we analyze the algorithm.

\begin{algorithm}[H]  
\caption{Connection($G$)}  \label{alg:connection}
\begin{algorithmic}[1]

\STATE /********  First Connection Stage  ***********/

\FOR {each tree $T \in G$ in parallel}
    \STATE Scan all edges $(u,w)\in G$, such that $u \in T$ and $w \notin T$. Search for an edge $e = (u,w)$, such that the first coordinate in $L(w)$ is the smallest, and this coordinate is 
    smaller than the ID of $T$.
    \IF{such an edge $e$ was found}
        \STATE Connect $T$ to the tree to which $w$ belongs, and make that tree a parent of $T$.
        \STATE The vertex $u$ becomes the new root of $T$. Set the label of each vertex $v \in T$ to be $\langle L(T), l    \rangle$ where $l = dist_T(u,v)$.
    \ELSE 
        \STATE Mark $T$ as a local minimum tree.
    \ENDIF
\ENDFOR

\STATE /********  Second Connection Stage  ***********/

\FOR {each local minimum tree $T'_i$ in parallel}
    \STATE Perform a convergecast in $T'_i$ to check if $T'_i$ became a parent of another tree.
    \IF{$T'_i$ did not became a parent of another tree}
        \STATE Choose an edge $e'$ with one endpoint in $T'_i$, connecting to another connected component arbitrarily.
    \ENDIF
\ENDFOR

\STATE Wake up all vertices in $G$ for one round and exchange knowledge about all edges $e'$ that cross between components $T'_i$, $T'_j$, $i \neq j$. This is achieved by sending component labels over the edges $e'$ in parallel within one round. In the end of this round, all vertices enter the sleep state.

\FOR {each connected component $C$ in parallel}
    \STATE Using convergecast collect information about all vertices of $C$ in the root of $C$.
    \STATE The root $r_C$ of $C$ executes a BFS internally and reassigns labels to each vertex in $v \in C$, such that the new label is $\langle L(C), l    \rangle$, where $l = dist_C(u,v)$.
    In this reassignment, the root $r_c$ remains the same.
    \STATE Broadcast the result to all vertices in $C$.
\ENDFOR

\end{algorithmic}
\end{algorithm}

\subsection{Analysis}

We turn to analyse our algorithm for spanning a DLT on the input graph $G$. We prove several lemmas and conclude with the main result for our scheme.

\begin{lem}
Let $C$ be a connected component at the end of a connection phase. Then $C$ is a DLT.
\end{lem}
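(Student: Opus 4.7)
The plan is to extend Lemma \ref{lem:isDLT} to account for the extra edges added in the second stage and the BFS-based relabeling that follows. I need to verify three properties for the post-stage-two component $C$: (a) the underlying graph of $C$ is a tree, (b) every non-root vertex has a label strictly greater than the label of its parent, and (c) every non-root vertex knows the label of its parent.

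For (a), I would argue by edge counting. By Lemma \ref{lem:isDLT} each stage-one component is already a tree, and stage two introduces at most one cross-component edge per local-minimum tree that was not already chosen as a parent in stage one. Orienting this added edge from the local-minimum root $v$ toward its chosen endpoint $u$ in the other component, exactly as Section \ref{subsec:conStage2} prescribes, attaches $T_m$ as a subtree rooted at $v$ below $u$. Summing over the stage-one DLTs that are merged to form $C$, the number of cross edges added equals one less than the number of merged trees, so no cycle is created and $C$ is a tree.

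For (b), after $r_C$ executes a BFS internally and assigns each vertex $v$ the label $\langle L(C),\, \mathrm{dist}_C(r_C,v)\rangle$, the parent $p(v)$ in the resulting orientation lies at BFS distance $\mathrm{dist}_C(r_C,v)-1$ from $r_C$ and shares the first coordinate $L(C)$ with $v$. The lexicographic comparison thus yields $L(p(v)) < L(v)$, which is precisely the first DLT property. For (c), the broadcast performed in the last line of Algorithm \ref{alg:connection} delivers the new labels down the BFS tree; because $r_C$ knows the full BFS structure from the preceding convergecast, it can attach each vertex's parent label to the downward message, and equivalently every vertex can recover its parent's label by decrementing the second coordinate of its own label. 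Together with (a) and (b), this shows that $C$ is a DLT.

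The main obstacle I anticipate is in (a): ruling out cycles when several local-minimum trees attempt to connect to each other ``arbitrarily''. The cleanest way is probably to observe that, since $r_C$ performs a fresh BFS and reorients $C$ around itself, the algorithm effectively selects a spanning tree of whatever connected subgraph the stage-two edges produce, and the DLT properties then hold with respect to this reoriented spanning tree. Once this is settled, parts (b) and (c) are direct consequences of the BFS relabeling and the final broadcast, mirroring the corresponding steps in the proof of Lemma \ref{lem:isDLT}.
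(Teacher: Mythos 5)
Your decomposition into properties (a)--(c) is sound, and parts (b) and (c) go through essentially as in the paper: the BFS relabeling $\langle L(C), \mathrm{dist}_C(r_C,v)\rangle$ makes every parent lexicographically smaller (equal first coordinates, second coordinate one less), and the final broadcast/wake-up round supplies knowledge of the parent's label. The gap is in (a), and it sits exactly where you flagged it. Your edge count ``cross edges added $=$ number of merged trees $-1$'' is equivalent to the claim that exactly one of the merged DLTs fails to add an outgoing edge; since each DLT adds at most one outgoing edge and the component is connected, the only way treeness can fail is a directed cycle in the contracted parent-choice graph, which occurs precisely when \emph{every} DLT in the component adds an edge. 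So the count is not an independent argument---it restates what must be proved. Your fallback (``the BFS reorients everything, so we effectively get a spanning tree'') begs the question: both the convergecast to $r_C$ and the BFS presuppose that the component already has a unique, well-defined root, which is exactly what is in doubt if, say, two local-minimum trees were able to choose each other in stage two.

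The missing ingredient, which is the heart of the paper's own proof, is the observation that no two local-minimum trees can be adjacent: of any two adjacent trees, the one with the larger label has a smaller-labeled neighbor and therefore connects in stage one, so it is not a local minimum. Consequently the unchosen local-minimum trees form an independent set in the contracted graph, each has in-degree zero (nobody chose it in stage one, and no other local minimum is adjacent to it to choose it in stage two), and its single stage-two edge points into a stage-one component $C'$ that is already a rooted tree by Lemma \ref{lem:isDLT}. Attaching in-degree-zero subtrees to a rooted tree cannot create a cycle or a second root; this yields (a) and also certifies that $r_C$ is well defined before the BFS relabeling is invoked. With that observation inserted, your argument closes.
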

\begin{proof}
Since our connection phase is divided into two stages, we also divide our proof to two. First, Let $C' \in C$ be a connected component at the end of stage one of the connection phase. Let $S$ be the set of DLTs composing $C'$. We define a graph $H$, where the vertex set is a DLT $T_i \in S$ that corresponds to $v_i$ in $H$. $H$ contains an edge $(v_i, v_j)$ if there is an edge between the respective DLTs in $S$. We assign $ID(v_i) = L(T_i)$, that is each vertex has Id equal to the label of the root of its corresponding DLT in $C'$. Let $T_i$ and $T_j$ be two DLTs in $C'$ such that $T_i$ oriented an edge towards $T_j$. Therefore, $L(T_i) > L(T_j)$. Then we have that $v_i$ is a child of $v_j$ in $H$ and $ID(v_i) > ID(v_j)$. This gives us an acyclic orientation such that each vertex has a single outgoing edge (towards its parent). Thus, $H$ is an oriented tree. Not only that, $H$ is a DLT since each child has an ID greater than its parent. Then we look at the edge $(w, u)$ which is the edge that originally connects $T_i$ to $T_j$ in $C'$. We make sure that, internally, each vertex $v_i \in H$ assigns the IDs of its corresponding DLT $T_i$ to root $T_i$ in $w$ while preserving the DLT property in this new assignment. Therefore, $H$ is a DLT where each vertex is by itself a DLT. Thus, $C'$ is a DLT. \\
Let $R = {R_1, R_2,\dots}$ be the set of DLTs that connect to $C'$ in stage two of the connection phase. We again look at each DLT as a vertex. First, since each such vertex has ID which is a local minimum, there are no two such vertices which are neighbors, and therefore there are no two vertices that can choose each other. Furthermore, we make sure that these vertices are not connected to any other vertex in phase one of the connection phase, thus they are an independent set. $C'$ is a DLT and each DLT in $R$ connects to $C'$ through a single edge $(a,b)$ that is oriented towards $C'$. We again have an acyclic orientation where each vertex has a single parent. Thus, $C$ is a tree. Let us look at the DLTs as vertices again. Then in our algorithm we set $L(v_{R_i}) > L(v_{C'})$. This guarantees that the induced graph on the vertex set $H \cup {v_{R_i}}$ is a DLT. Then, internally, we root $R_i$ in $a$ and reassign the labels of the vertices in $R_i$ to preserve its DLT property. Thus, the resulting component $C$ is a DLT of DLTs thus it is a DLT of itself.
\end{proof}

Next, we analyze the awake complexity of the algorithm. Our claim is that a connection phase halves the number of DLTs in the forest. This is quite straightforward. If a DLT is not connected in stage one of the connection phase, phase two considers it as problematic and makes sure it connects to another DLT. Thus, every DLT connects to another DLT and thus the number of DLTs is at least halved. \\
The next lemma analyses the performance of a connection phase.

\begin{lem}  \label{lem:conPhaseTime}
Each vertex $v$ is awake for at most $O(1)$ rounds in each connection phase.
\end{lem}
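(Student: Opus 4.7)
The plan is to decompose a single connection phase into a constant-length sequence of atomic sub-procedures, where each sub-procedure is either a broadcast or a convergecast inside a current DLT, or a one-round global wake-up. Lemma~\ref{lem:DLTprop} then caps the awake contribution of each tree-internal sub-procedure at $2$ rounds per vertex, while a global wake-up contributes $1$ round per vertex. Summing over the constantly many sub-procedures yields the claimed $O(1)$ bound.

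First I would enumerate the sub-procedures of Stage One: (i) a convergecast in each tree $T$ by which every vertex reports to $r_T$ its incident inter-tree edges and the labels of the neighboring trees; (ii) a broadcast from $r_T$ delivering the re-rooting choice and the new label assignment of $T'$ down to all vertices of $T$; (iii) a single global wake-up round in which every vertex exchanges its updated label with all of its neighbors. For Stage Two I would list: (iv) a convergecast in each local-minimum tree $T'_i$ used to detect whether $T'_i$ was already adopted as a parent during Stage One; (v) a second global wake-up during which the endpoints of each newly selected connecting edge $(v,u)$ exchange component information; (vi) a convergecast in each resulting component $C$ collecting the full structure of $C$ at $r_C$; and (vii) a broadcast from $r_C$ propagating the recomputed label assignment of $C$ to all its vertices.

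Next I would apply Lemma~\ref{lem:DLTprop} to each of (i), (ii), (iv), (vi), (vii): in each case the tree on which the procedure is invoked is a DLT at the moment the procedure starts, using either the pre-phase labels or, for (vi) and (vii), the intermediate labels produced after Stage One. Each such operation therefore costs at most $2$ awake rounds per vertex. The two global wake-ups (iii) and (v) contribute $1$ awake round each. Summing gives at most $5\cdot 2 + 2 = 12$ awake rounds per vertex per connection phase, which is $O(1)$.

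The main obstacle to address is global scheduling: in the sleeping model every sub-procedure must have a starting clock time that is known in advance to every vertex, because a vertex can only decide while awake in which future rounds it will wake up, and the labels appearing in Lemma~\ref{lem:DLTprop} are shifted by the global bound $n'$ that is fixed in Section~\ref{sec:DLTmain}. I would handle this by allocating, ahead of execution, a disjoint clock-time window of length $\Theta(n')$ to every sub-procedure of every phase, which is large enough to contain all label-indexed wake-up times used by that sub-procedure. This inflates the total clock complexity by a constant factor per phase but leaves the awake counts unaffected, so the summation above yields the stated $O(1)$ bound per vertex per phase.
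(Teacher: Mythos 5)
Your proposal is correct and follows essentially the same route as the paper's proof: decompose the connection phase into a constant number of broadcast/convergecast operations (each costing $2$ awake rounds by Lemma~\ref{lem:DLTprop}) plus a constant number of global one-round wake-ups, and sum. Your extra remark on pre-allocating disjoint clock-time windows of length $\Theta(n')$ per sub-procedure is a point the paper handles only in passing in Section~\ref{subsec:stage1}, but it does not change the argument.
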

\begin{proof}
We show this step by step over the algorithm.
\begin{enumerate}
    \item {\em Each vertex sends the edge to the neighbor with the minimal label to the root of a DLT.} This is a convergecast procedure which we showed to take 2 awake rounds for each vertex.
    \item {\em The root chooses an edge and broadcasts it to all vertices in the DLT.} A broadcast also takes 2 awake rounds for each vertex.
    \item {\em Each DLT, internally, reassigns the labels of its vertices.} This can be done with one convergecast procedure and one broadcast procedure. This step takes up to four awake rounds.
    \item {\em A local minima DLT chooses a component to connect to and then receives the label of that component.} We wake up the entire graph for exactly one round. Each vertex adds a single awake round to its awake count.
    \item {\em A local minima DLT, internally, reassigns the labels of its vertices.} Again, this can be done using one convergecast procedure and one broadcast procedure. Each vertex is awake for at most 4 rounds.
    \item {\em All vertices in $G$ wake up for one round to learn the new labels of their 1-hop neighborhood.} Another single awake round is added for each vertex awake count.
\end{enumerate}
Overall, each vertex is awake for $O(1)$ rounds in each connection phase.
\end{proof}

\noindent We can now conclude the analysis of our DLT spanning algorithm. Since the number of DLTs is at least halved in each phase, there are $O(\log n)$ such phases, and the following theorem is directly obtained from Lemma \ref{lem:conPhaseTime}.

\begin{thm}  \label{thm:DLTrunningTime}
A DLT for any input graph $G$ can be deterministically computed in $O(\log n)$ awake rounds in the sleeping model. 
\end{thm}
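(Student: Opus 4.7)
The plan is to stitch together the three ingredients already assembled in the section: (i) the invariant that each connected component at the end of every connection phase is a DLT (the immediately preceding lemma), (ii) Lemma \ref{lem:conPhaseTime}, which caps the awake cost of a single phase at $O(1)$, and (iii) a halving argument bounding the number of phases by $O(\log n)$. The theorem is then obtained by iterating the Connection procedure until a single component remains.

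First I would set up the potential. Let $\Phi_i$ denote the number of DLTs in the forest after $i$ connection phases, so that $\Phi_0 = n$ (the singleton initialization). The goal is $\Phi_i = 1$, at which point invariant (i) yields the desired spanning DLT. The core claim is that $\Phi_{i+1} \le \Phi_i / 2$ for every phase $i$ in which $\Phi_i \ge 2$. To argue this, I would define a ``merge graph'' $M_i$ whose vertices are the DLTs in the current forest and whose edges are the inter-DLT connections chosen during phase $i+1$. Stage one of the phase guarantees that every DLT which is not a local minimum in the DLT-label ordering picks exactly one outgoing edge to a smaller-label neighbor; stage two guarantees that every local-minimum DLT that has not been chosen by any other DLT picks an arbitrary outgoing edge. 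Consequently every vertex of $M_i$ has degree at least one, so $M_i$ has no isolated vertices, and the number of connected components of $M_i$ is at most $\Phi_i / 2$. Since each component of $M_i$ collapses into a single DLT in the new forest (by invariant (i)), we get $\Phi_{i+1} \le \Phi_i / 2$.

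Iterating this bound gives $\Phi_k \le n/2^k$, so after $k = \lceil \log_2 n \rceil$ phases we have $\Phi_k = 1$, i.e.\ a single DLT spanning $G$. Combined with Lemma \ref{lem:conPhaseTime}, which asserts that each vertex spends $O(1)$ awake rounds per phase, the total awake complexity per vertex across all phases is $O(1) \cdot O(\log n) = O(\log n)$, as claimed.

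The main obstacle I anticipate is nailing down the halving claim cleanly: one must verify that stage two really does absorb every local-minimum DLT that was left unattached after stage one, and in particular that no DLT survives a phase as its own component unless it is the unique remaining one. This is essentially bookkeeping — the convergecast check in stage two detects exactly the unattended local minima, and the arbitrary-edge choice is guaranteed to exist for any non-isolated component of $G$ — but it is the place where the argument is most easily underspecified, so I would state it as an explicit sub-claim and reference the two-stage structure of Algorithm \ref{alg:connection} when discharging it.
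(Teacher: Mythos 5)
Your proposal is correct and follows essentially the same route as the paper: the paper also argues that every DLT (either in stage one, or, if it is an unattached local minimum, in stage two) merges with at least one other DLT, so the number of components at least halves per phase, giving $O(\log n)$ phases, and then multiplies by the $O(1)$ awake cost per phase from Lemma \ref{lem:conPhaseTime}. Your merge-graph formalization of the halving step is just a cleaner write-up of the paper's informal "every DLT connects to another DLT" argument.
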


As shown in Lemma \ref{lem:DLTprop}, the action of convergecast and the action of broadcast on the resulting spanning tree each require only $O(1)$ time in the sleeping model and thus we can collect the topology of the entire input graph to the root of our spanning tree, calculate the solution to the problem $\mathcal{P}$ deterministically and broadcast the solution to all vertices through our spanning tree, again in $O(1)$ time. This places an upper bound on the class of all decidable problems as we conclude in the following theorem.

\begin{thm}
Any decidable problem can be solved within $O(\log n)$ worst-case deterministic awake-complexity in the sleeping model.
\end{thm}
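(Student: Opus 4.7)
The plan is to combine the DLT construction from Theorem \ref{thm:DLTrunningTime} with the broadcast/convergecast primitives of Lemma \ref{lem:DLTprop}, using the root as a ``central processor'' that simulates the given sequential decision procedure. Let $\mathcal{P}$ be any decidable problem, and fix a sequential deterministic algorithm $A_\mathcal{P}$ that, given the full description of the input graph $G$ together with all vertex IDs and inputs, outputs a legal solution together with the per-vertex output labels.

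The first step is to invoke Theorem \ref{thm:DLTrunningTime} to build a DLT $\mathcal{T}$ spanning $G$ in $O(\log n)$ awake rounds. Next, I would perform a single convergecast on $\mathcal{T}$, where each vertex forwards to its parent the union of the local information it already holds (its ID, its input for $\mathcal{P}$, and the list of its neighbors in $G$) together with everything it has collected from its children. By Lemma \ref{lem:DLTprop} this costs two additional awake rounds per vertex, and at the end the root $r$ possesses the complete description of $G$ (since the DLT spans $V(G)$ and every edge of $G$ is reported by at least one endpoint). The root then locally runs $A_\mathcal{P}$ on this description; because $\mathcal{P}$ is decidable, this internal computation terminates, and crucially it does not add awake \emph{rounds}, since all local processing inside a single round is free by the model's definition (or, at worst, it can be treated as happening inside one awake round scheduled for $r$).

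Having computed the global solution, $r$ initiates a broadcast along $\mathcal{T}$ carrying the full output assignment; again by Lemma \ref{lem:DLTprop} this takes two awake rounds per vertex. Upon receiving the broadcast, each vertex extracts its own component of the solution and halts. Summing the contributions: $O(\log n)$ awake rounds for the DLT construction, $O(1)$ for convergecast, $O(1)$ for the internal computation phase, and $O(1)$ for the broadcast, giving $O(\log n)$ awake rounds in the worst case.

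The only potentially delicate point is ensuring that the schedules of the four phases do not conflict and that each vertex knows, a priori, in which clock rounds it must wake up for each phase. This is handled exactly as in Section \ref{sec:conPhase}: each phase is allotted a predetermined block of clock rounds of length at most $c \cdot n'$ for a sufficiently large constant $c$ and a commonly known upper bound $n'$ on labels, and within each block every vertex schedules its two wake-ups according to $L(v)$ and $L(p(v))$ (or their ``mirrored'' variants $n'-L(v)$ and $n'-L(p(v))$ for convergecast). Since these schedules are determined deterministically from information already local to each vertex at the end of the DLT construction, no further coordination is required, and the bound follows.
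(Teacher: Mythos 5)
Your proposal is correct and follows essentially the same route as the paper: construct a DLT in $O(\log n)$ awake rounds via Theorem \ref{thm:DLTrunningTime}, convergecast the full topology to the root, run the sequential decision procedure internally, and broadcast the per-vertex outputs back, with each of the latter steps costing $O(1)$ awake rounds by Lemma \ref{lem:DLTprop}. Your additional remarks on pre-agreed scheduling of the phases only make explicit what the paper leaves implicit.
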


\section{A Tight Bound for DLT}  \label{sec:boundDLT}

In this section we prove that the complexity of DLT in the sleeping model is $\Omega(\log n)$. The proof is by a reduction from leader election on rings. For the latter problem, it is known that a certain number of messages must be sent in the network in order to solve it \cite{FL87}. In what follows we prove that this lower bound on messages implies a lower bound on awake rounds in the sleeping model. Before presenting the proof, we need the following lemma, which demonstrates a connection between the number of messages an algorithm produces and the complexity in the sleeping model.

\begin{lem} \label{lem:minMSGs}
Any algorithm $\mathcal{A}$ which requires at least $\Omega(\Delta n  \log n)$ messages for its execution has an awake complexity of at least $\Omega(\log n)$ in the sleeping model.
\end{lem}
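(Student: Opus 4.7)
The plan is a direct contrapositive counting argument: bound the total number of messages an algorithm with small awake complexity can possibly transmit, and contrast this with the hypothesized lower bound of $\Omega(\Delta n \log n)$ messages. The key observation is that in the sleeping model a vertex can only send (or receive) messages while it is in the awake state, and in a single awake round it can transmit at most one message across each of its incident edges, hence at most $\Delta$ messages in total in that round.

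Concretely, I would assume for contradiction that $\mathcal{A}$ achieves worst-case awake complexity $t' = o(\log n)$. Since the worst-case awake complexity is defined as $\max_{v \in V} a(v)$, every vertex $v$ is awake in at most $t'$ rounds of the execution. Summing the per-round sending capacity, each vertex $v$ sends at most $t' \cdot \deg(v) \le t' \cdot \Delta$ messages throughout the entire execution. Summing over all vertices yields a total of at most $n \cdot t' \cdot \Delta$ messages transmitted by $\mathcal{A}$.

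Now I would plug in the asymptotic assumption $t' = o(\log n)$ to conclude that the total number of messages is $o(n \Delta \log n)$. This directly contradicts the hypothesis that $\mathcal{A}$ requires $\Omega(\Delta n \log n)$ messages for its execution. Therefore no such algorithm with $t' = o(\log n)$ awake rounds can exist, which gives the claimed $\Omega(\log n)$ lower bound on the worst-case awake complexity of $\mathcal{A}$.

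The main subtlety (rather than obstacle) is being careful about what "messages" are counted: the bound I need is on messages actually delivered during the execution, and each delivered message must be sent in a round in which the sender is awake (and, in a standard full-duplex model, received in a round in which the receiver is also awake). The sender-side accounting already suffices, because each message has a unique sender-round pair, and summing $t'$ awake rounds times $\Delta$ incident edges per vertex over the $n$ vertices gives the upper bound I need without double counting. I would state this bookkeeping explicitly in one sentence to justify the inequality $\#\text{messages} \le n \cdot t' \cdot \Delta$ before deriving the contradiction.
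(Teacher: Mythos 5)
Your proposal is correct and follows essentially the same route as the paper's proof: a counting argument that bounds the total number of messages by $n \cdot t' \cdot \Delta$ (at most one message per incident edge per awake round, with concatenation handling any multiplicity) and derives a contradiction with the $\Omega(\Delta n \log n)$ message lower bound. The paper phrases it with an explicit constant $c$ rather than asymptotic notation, and adds a side remark about the vertex-averaged version, but the core argument is identical.
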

\begin{proof}
Let the number of messages that must be sent during the execution of $\mathcal{A}$ be $c \Delta \cdot n \log n$ where $c > 0$ is a constant. We show that there is at least one vertex $v$ that must be awake for at least $c \log n$ rounds.  Assume for contradiction that all vertices in $G$ are awake for less than $c \log n$ rounds. 
Each vertex sends at most $\Delta$ messages (one across each adjacent edge) in a single awake round. If more than one message per edge per round is required, all these messages can be concatenated into a single message. Thus, each vertex sends less than $\Delta \cdot c \log n$ messages, and the overall number of messages in the execution is less than $n (\Delta \cdot c \log n) = c \Delta \cdot n \log n$.
This is a contradiction. Therefore, there must be a vertex that is awake for $\Omega(\log n)$ rounds. We conclude that the awake round complexity of $\mathcal{A}$ in the sleeping model is also $\Omega(\log n)$. Given that there are at least $\Omega(\Delta n \log n)$ messages and $n$ vertices and at most $\Delta n$ edges, on average, each vertex is awake for at least $c \log n$ rounds.
Thus, the running time of $\mathcal{A}$ (in the worst case and average case) is at least $\Omega(\log n)$.
\end{proof}

\noindent {\bf Remark:} An algorithm $\mathcal{A}$ that requires $\Omega(\Delta n \log n)$ messages has an awake complexity of $\Omega(\log n)$, not only in the worst vertex, but also on average over the vertices. (Such an average complexity is referred to as {\em vertex-averaged complexity} \cite{CGP20}.) Indeed , if the vertex-averaged awake complexity is $o(\log n)$ then the sum of awake rounds for all vertices is $o(n \log n)$, and the number of messages is $o(\Delta n \log n)$, according to the proof of lemma \ref{lem:minMSGs}. 

Next, we employ Lemma \ref{lem:minMSGs} in order to prove that DLT requires $\Omega(\log n)$ complexity in the sleeping model. We show this for a ring graph by a reduction from the leader election problem.

\begin{thm}    \label{thm:omegaIDassign}
Let $t > 0$ be an arbitrarily large integer, and $\mathcal{A}$ any deterministic algorithm  for the DLT problem, which requires $t$ rounds in the $\mathcal{LOCAL}$ model. Then there is an ID assignment from a sufficiently large range, as a function of $t$, such that $\mathcal{A}$ requires $\Omega(\log n)$ awake-complexity in the sleeping model.
\end{thm}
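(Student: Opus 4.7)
The plan is to reduce from the classical message-complexity lower bound of Frederickson and Lynch \cite{FL87}, which states that any deterministic leader-election algorithm on an $n$-vertex ring requires $\Omega(n \log n)$ messages, provided the identifier range is large enough relative to the number of rounds allowed. The strategy is to show that a DLT on a ring immediately yields a leader, so that a fast DLT algorithm would contradict this message bound via Lemma~\ref{lem:minMSGs}.

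First, I would observe that on a ring $G$, where $\Delta = 2$, the bound of Lemma~\ref{lem:minMSGs} specializes to: any algorithm on a ring that uses $\Omega(n \log n)$ messages has awake complexity $\Omega(\log n)$ in the sleeping model. Next, I would argue that any algorithm $\mathcal{A}$ that constructs a DLT can be extended, with only $O(1)$ additional awake rounds, into a leader-election algorithm: once the DLT is built, by Lemma~\ref{lem:DLTprop} the root can broadcast its own identifier through the tree in $2$ awake rounds per vertex, so every vertex learns the identity of the unique root. The root is then declared the leader, and this determination is made deterministically and consistently by all vertices.

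Now suppose for contradiction that $\mathcal{A}$ solves DLT within $o(\log n)$ awake rounds in the sleeping model. Combined with the $O(1)$-awake-round post-processing above, this yields a deterministic leader-election algorithm $\mathcal{A}'$ on rings whose awake complexity is also $o(\log n)$. By the concatenation argument used in the proof of Lemma~\ref{lem:minMSGs} (each vertex has only two incident edges in a ring, and all messages it would send across the same edge across an awake round can be packaged together), the total number of messages exchanged during the execution of $\mathcal{A}'$ is $o(n \log n)$. This contradicts the Frederickson–Lynch lower bound, so $\mathcal{A}$ must use $\Omega(\log n)$ awake rounds.

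The main subtlety, and the place where the statement's explicit ``sufficiently large ID range'' hypothesis is used, is that the Frederickson–Lynch lower bound is sensitive to the identifier range: for a fixed round bound $t$ in the $\mathcal{LOCAL}$ model, one must select the IDs from a range large enough (as a function of $t$) for the $\Omega(n \log n)$ message bound to apply to any $t$-round algorithm. Since the claim in the theorem is that $\mathcal{A}$ is a $t$-round $\mathcal{LOCAL}$ algorithm, I would invoke \cite{FL87} with the appropriate ID range for this specific $t$, obtain a ring instance on which $\mathcal{A}'$ must send $\Omega(n \log n)$ messages, and derive the $\Omega(\log n)$ awake-complexity lower bound on that instance. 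This is the main obstacle: one has to be careful that the $\Omega(\log n)$ awake complexity bound cannot be circumvented by exploiting tiny IDs, which is exactly why the ID-range quantifier appears in the theorem statement.
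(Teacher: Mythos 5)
Your proposal is correct and follows essentially the same route as the paper's own proof: assume an $o(\log n)$-awake-round DLT algorithm, apply the message-concatenation argument of Lemma~\ref{lem:minMSGs} on a ring (where $\Delta = 2$) to bound the message count by $o(n\log n)$, observe that a DLT immediately yields a leader (the root), and contradict the Frederickson--Lynch $\Omega(n\log n)$ message lower bound for the ID range $R(n,t)$ appropriate to the round bound $t$. Your explicit $O(1)$-awake-round broadcast of the root's identifier and your handling of the ID-range quantifier match the paper's treatment.
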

\begin{proof}
The proof is by contradiction. 
Assume that there is an algorithm $\mathcal{A}$ with awake-round complexity of $o(\log n)$, overall complexity $t > 0$, for ID assignment from an arbitrarily large range. Then $\mathcal{A}$ uses at most $o(n \Delta \log n)$ messages (see Lemma \ref{lem:minMSGs}). Let $C$ be an $n$-vertex cycle graph. The maximum degree of $C$ is $\Delta = 2$. We execute $\mathcal{A}$ on $C$ in the ordinary (not-sleeping) $\mathcal{LOCAL}$ model. We obtain a DLT of $C$ within $t$ rounds. Now, the root can be elected as the leader, and the other vertices know that they are not the root. In a DLT they also know the ID of the root. Thus, we have an algorithm for leader election in the $\mathcal{LOCAL}$ model which employs at most $o(n \log n)$ messages. 

According to \cite{FL87}, the leader election problem requires $\Omega(n \log n)$ messages, if vertex IDs are chosen from a set of sufficiently large size $R(n,t)$, where $R$ is the Ramsey function and $t$ is the running time of the algorithm. This is a contradiction.

\end{proof}

\noindent It follows that any problem whose solution can be used to elect a leader within $o(\log n)$ additional awake rounds requires $\Omega(\log n)$ awake-complexity. We denote the class of such problems by {\bf DLT-hard} problems. Theorems \ref{thm:DLTrunningTime} and \ref{thm:omegaIDassign} directly give rise to the following corollary.

\begin{thm}  \label{thm:endSec3}
The class of DLT-hard problems has a deterministic complexity tight bound of $\Theta(\log n)$ in the sleeping model.
\end{thm}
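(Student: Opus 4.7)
The plan is to obtain Theorem~\ref{thm:endSec3} as a direct consequence of the two preceding results, treating the upper and lower bounds separately.

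For the upper bound, I would argue that any DLT-hard problem inherits the $O(\log n)$ complexity of decidable problems established earlier. Specifically, by Theorem~\ref{thm:DLTrunningTime} a DLT can be constructed in $O(\log n)$ awake rounds; combined with Lemma~\ref{lem:DLTprop}, the root can collect the entire graph topology via convergecast in $O(1)$ further awake rounds, compute locally any decision needed for the problem, and broadcast the answer back in another $O(1)$ awake rounds. Since DLT-hard problems form a subclass of the decidable problems, this yields an $O(\log n)$ upper bound on their awake complexity.

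For the lower bound, I would proceed by contradiction using the definition of DLT-hard. Suppose some DLT-hard problem $\mathcal{P}$ admits a deterministic algorithm $\mathcal{A}_{\mathcal{P}}$ of awake complexity $o(\log n)$. By the defining property of DLT-hardness, once $\mathcal{A}_{\mathcal{P}}$ terminates we can extend it with an additional $o(\log n)$ awake rounds to obtain a DLT of the input graph. Concatenating the two procedures produces a deterministic algorithm for DLT with awake complexity $o(\log n) + o(\log n) = o(\log n)$, contradicting Theorem~\ref{thm:omegaIDassign} for any sufficiently large ID range guaranteed by that theorem.

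Combining the two directions gives $\Theta(\log n)$, as required. The argument is essentially a composition; the only subtle point is to make sure the lower bound is invoked with an ID assignment chosen after fixing the hypothetical running time of $\mathcal{A}_{\mathcal{P}}$, so that the Ramsey-type quantifier in Theorem~\ref{thm:omegaIDassign} is applied correctly. Since the total clock-time of the composed algorithm remains finite (even if asleep rounds are used between awake ones), we can still pick an ID range large enough to invoke the message-complexity lower bound of~\cite{FL87}, so no obstacle arises beyond quantifier bookkeeping.
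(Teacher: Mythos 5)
Your proposal is correct and follows essentially the same route as the paper, which states the theorem as a direct consequence of Theorems~\ref{thm:DLTrunningTime} and~\ref{thm:omegaIDassign} without writing out the composition; you simply make explicit the two directions (upper bound via the DLT construction plus broadcast/convergecast, lower bound via the contradiction that an $o(\log n)$ algorithm for a DLT-hard problem would yield an $o(\log n)$ DLT algorithm). Your remark about fixing the composed algorithm's clock time before choosing the ID range is a correct and worthwhile piece of quantifier bookkeeping that the paper leaves implicit.
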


\noindent {\bf Remark:} Note that in Section \ref{sec:conPhase} we showed that the DLT problem is complete in the class of decidable problems and Theorem \ref{thm:endSec3} states it is $\Omega(\log n)$-hard under this class.

\section{Solving Oriented-Local Problems}   \label{sec:olocal}

In this section we devise an algorithm for solving a class of {\em Oriented-Local} problems. This class contains all problems which, given an acyclic orientation on the edge set of the graph, can be solved as follows. Each vertex awaits all neighbors on outgoing edges to produce an output, and then computes its own output as a function of the outputs of these neighbors. (Vertices with no outgoing edges produce an output immediately.) We define this class formally.

\begin{dfn}  \label{def:1OLOCAL}
The class of {\bf 1-hop Oriented Local Problems ({\bf 1-O-LOCAL})} consists of all problems that, given an acyclic orientation $\mu$ on the edge set of $G$, can be solved in the following way. Let $v$ be a vertex in $G$. Let $U$ be the set of neighbors of $v$ in its 1-hop neighborhood which precedes $v$ in the orientation $\mu$, i.e., the vertices connected by outgoing edges from $v$. Let $s(p)$, for $p \in U$, be the solution of the problem. Then, $v$ can internally calculate $s(v)$ with the knowledge of $\{s(p)\ |\ p \in U \}$. \\
The class of {\bf Oriented Local Problems ({\bf O-LOCAL})} is a generalization of 1-O-LOCAL, where the set $U$ contains all vertices on paths that emanate from $v$, rather than $v$'s immediate neighbors on such paths.
\end{dfn}

\noindent As one can tell, a solution for a problem in this class depends on a given orientation. Such orientation can be calculated or given as an input to the algorithm. In this work we assume that no orientation is given and we are forced to calculate one as part of the solution. We note that MIS and $(\Delta+1)$-vertex-coloring are examples of well-studied problems which fall in the class of {\bf 1-O-LOCAL} problems. \\

We start with an algorithm for $O(\Delta^2)$-vertex-coloring in $O(\log^*n)$ time \cite{L87}. This gives us an orientation of the edges where we orient edges in descending order, i.e., each edge is oriented towards the endpoint of a smaller color. We have all vertices in $G$ awake during the entire coloring algorithm. Let $q = O(\Delta^2)$ be an upper bound on the number of colors of the algorithm of Linial such that $q$ is a power of 2. At the next stage each vertex $v$ builds a binary search tree internally. The size of the tree is $2q-1$. The root of the tree receives the label in the middle of the range $[2q-1]$, which is $q$. Now we have $q-1$ values in each side of the tree. Specifically, $\{1,\dots,q-1\}$ for the left subtree and $\{q+1,\dots,2q-1\}$ for the right subtree. We choose the middle of the range $\{1,\dots,q-1\}$ for the left child of the root and the middle of the range $\{q+1,\dots,2q-1\}$ for the right child of the root. We continue this recursively, so that each node of the tree obtains a unique value from $[2q - 1]$. 

Now we recolor the vertices of the input graph using the following mapping. The recoloring is performed by all vertices in parallel, with no communication whatsoever. We map the elements from $[q]$ to the set of values appearing in the leaves of the binary tree. The mapping is the same in all vertices. Specifically, for each $i \in [q]$, the $i$-th element is mapped to the label of the $i$-th left-most leaf of the tree. Consequently, all vertices that were initially colored by the color $i$ switch their color to the label of the $i$the leaf in the tree. Note that each pair of neighbors select distinct leaves of the tree, since their original colors are distinct. Therefore, the coloring after the mapping is proper as well. 

Next, we switch to the sleeping state for all the vertices in the graph, and start solving a given {\bf 1-O-LOCAL} problem $P$. For the sake of simplicity, we proceed with the problem of MIS, but our method can be applied to any {\bf 1-O-LOCAL} problem, as would be obvious from the description of the algorithm. The scheme is as follows. Each vertex $v$ employs its color in the $O(\Delta^2)$-coloring, and a respective leaf in the binary tree, whose value equals the color of $v$. Let $R$ be the path from the leaf of the color of $v$ to the root of the binary tree. Let $r(v) = \{r_1,\dots,r_k\}$ be the values appearing in $R$. We denote $r' = r_1$. Note that some values in $r(v)$ may be greater than $r'$, while other values may be smaller than $r'$.  Then $v$ awakes at each round $r_i \in r(v)$, and sends a message to its awake neighbors about its state, e.g., whether it is in the MIS, not in the MIS or undecided. It also receives such messages from its awake neighbors in these rounds. Recall that $r' = r_1$ is the round number that is equal to the color of $v$. In round $r'$  the vertex  $v$  makes a decision if to join the MIS or not according to the information received from outgoing edges. The neighbors on such edges have smaller colors, and thus have made a decision before round $r'$. Indeed, the following lemma proves that $v$ has all the information from vertices of lower colors when round $r'$ arrives.

\begin{lem}
At round $r'$, which is mapped to the color of $v$, all vertices with colors smaller than that of $v$ have already made a decision. Furthermore, their decisions have been passed to $v$ in a previous round.  
\end{lem}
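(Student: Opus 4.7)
The plan is to proceed by induction on the (new) color $c(v)=r'$, i.e.\ the leaf value assigned to $v$ after the remapping. In the base case, $v$ has no outgoing neighbors — which must occur for the smallest color appearing in any connected component, since edges are oriented toward the endpoint of smaller color — so $v$ decides immediately in round $r'$ with nothing to wait for and the statement holds vacuously. For the inductive step, fix $v$ and assume the lemma holds for every vertex with new color strictly less than $r'$, and let $u$ be an arbitrary outgoing neighbor of $v$. Since $c(u)<c(v)=r'$, the inductive hypothesis applied to $u$ gives that $u$ already holds the decisions of all of its own outgoing neighbors by round $c(u)$, and hence $u$ produces its own decision in round $c(u)<r'$. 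This handles the first claim of the lemma: every outgoing neighbor of $v$ has decided strictly before round $r'$.

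For the second claim, I would exploit the balanced binary-search structure of the tree that every vertex constructs internally. The recursive construction, which divides the range $[2q-1]$ in half at each step, immediately yields the BST invariant: the value at any internal node is strictly greater than every value appearing in its left subtree and strictly smaller than every value appearing in its right subtree. Let $\ell_u$ and $\ell_v$ denote the leaves whose labels equal $c(u)$ and $c(v)$ respectively, and let $m$ be the value stored at their lowest common ancestor. Because $c(u)<c(v)$, the leaf $\ell_u$ must lie in the left subtree of the LCA and $\ell_v$ in its right subtree, so the BST invariant gives $c(u)<m<c(v)=r'$.

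Now the LCA lies on the root-path of both leaves, so $m\in r(u)\cap r(v)$, meaning both $u$ and $v$ are awake in round $m$. By the inductive hypothesis, $u$ has already fixed its decision by round $c(u)<m$, so in round $m$ it broadcasts its state to all of its awake neighbors, and in particular to $v$. Thus the decision of every outgoing neighbor $u$ of $v$ has been delivered to $v$ by some round $m$ with $m<r'$, and at round $r'$ the vertex $v$ possesses the full collection $\{s(u):u\in U\}$ required by Definition \ref{def:1OLOCAL}, completing the induction. The main obstacle is really only the clean derivation of the strict inequality $c(u)<m<c(v)$ from the recursive halving construction; once that BST fact is pinned down, pairing the two wake-up paths at their LCA does all of the work, and the same argument applies verbatim to any 1-O-LOCAL problem, not just MIS.
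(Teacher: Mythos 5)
Your proof is correct and follows essentially the same route as the paper: induction on the (remapped) color, with the key step being that the lowest common ancestor of the two leaves carries a value $m$ strictly between $c(u)$ and $c(v)=r'$, so both vertices are awake at round $m$ and $u$'s already-fixed decision is delivered there. The only difference is cosmetic — you state the BST ordering invariant a bit more explicitly than the paper does.
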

\begin{proof}
We prove the lemma by induction on the colors of the orientation. \\
{\bf Base:} For the left-most leaf in the tree, the mapping of the first color in the orientation maps to the first awakening round of the algorithm. Vertices with the first colors of the orientation have no outgoing edges and need not wait for decisions of any of their neighbors. As they wake at the first round they make a decision to be in the MIS and sleep again.\\
{\bf Step:} Let $v$ be a vertex which awakes in round $r'$ and assume by induction that all neighbors with lower colors already made a decision to be in the MIS or not.
Let $\hat \Delta$ be the number of neighbors of $v$ with colors smaller than the color of $v$. Let $S = s_1,\dots,s_{\hat \Delta}$ be the rounds mapped to each color of these neighbors.
Then we have $r' > \{s_i \in S\}$. Thus, in the binary tree, for each $i \in [\hat \Delta]$, $r'$ and $s_i$ have a lowest common ancestor with ID $t$, such that $s_i \ < t < r'$. (See Figure 1.) This is because a lowest common ancestor of two leaves must have these leaves in distinct subtrees rooted in its children. Otherwise, if both leaves belong to the same subtree of a child of a common ancestor, it is not the lowest one. 

Let $u$ be a neighbor of $v$ with a color corresponding to the mapping $s_i$. We note that $s_i$ must be in the subtree rooted in the left child of the ancestor of ID $t$ and $r'$ is in the subtree rooted in the right-child of the ancestor $t$. Both $u$ and $v$ are awakened in round $t$ according to our algorithm. At round $t$, since $s_i < t$, the vertex $u$ already made a decision if it is in the MIS or not, by the induction hypothesis. Thus, $u$  sends a message with its decision to $v$ at round $t$. Since $r' > t$, at round $t$, $v$ simply receives the messages and awaits round $r'$ to make a decision. (During this waiting period, the vertex $v$ may communicate with additional neighbors.)
When round $r'$ finally arrives, all neighbors with lower colors, those in $S$, have made decisions and sent their decision in the round corresponding to some common ancestor with $v$ in the binary tree. Thus, $v$ has learnt the decisions of neighbors with smaller colors than its own. Finally, $v$ makes a decision in round $r'$ according to all the decisions made by neighbors in $S$.
This concludes the proof of the lemma. 
\end{proof}

\begin{figure}[h]
    \centering
    \includegraphics[height=3.375in, width=6in]{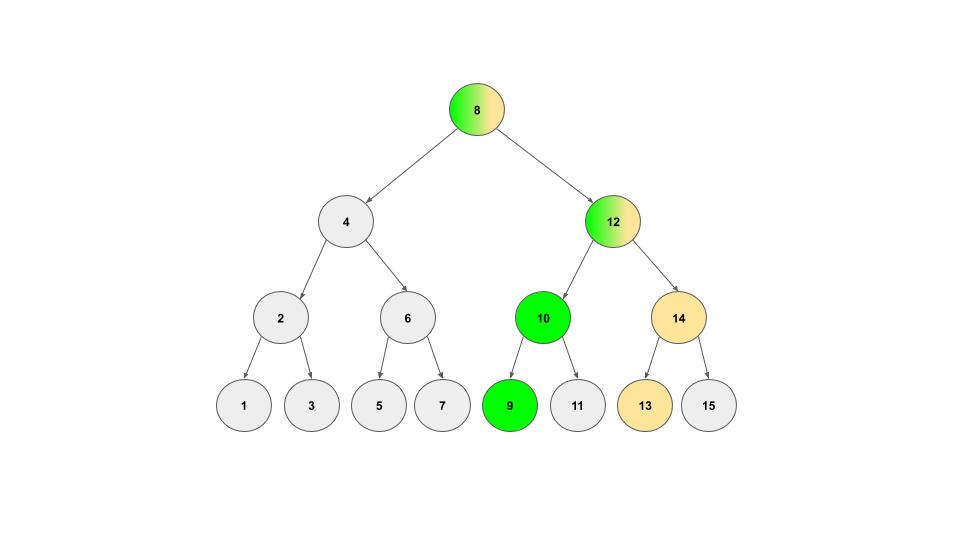}
    \caption{An example of a tree in the internal memory of each processor, for $q = 8$. A pair of neighbors $u,v$ (not depicted) are colored by 9 and 13, respectively. In green are the rounds in which vertices that correspond to  color 9 are awake. In orange are the rounds in which the vertices that correspond to color 13 are awake. The lowest common ancestor of these two colors is 12. In this round both $u$ and $v$ awake, and $v$ receives the decision of $u$. (Note that both $u$ and $v$ are also awake in round 8, but in this round $u$ may have not reached a decision yet, since its color is $9 > 8$.)}
    \label{fig:treeDemon}
\end{figure}

For a problem $P$ in {\bf 1-O-LOCAL}, a vertex $v$ can make its decision in round $r'$. For example, the following decisions are made in some  well-studied {\bf 1-O-LOCAL} problems: For MIS, $v$ joins the MIS if all neighbors with lower colors are not in the MIS. For $(\Delta+1)$-vertex-coloring, $v$ chooses a new color from the palette $[\Delta+1]$ which is not yet chosen as a new color by its neighbors with lower old-colors (i.e., colors according to the initial orientation). 

The depth of a binary tree with $O(\Delta^2)$ leaves is at most $O(\log \Delta)$. Thus, the size of a path $R$ from a leaf to the root is at most $O(\log \Delta)$. The vertex $v$ only awakens in rounds corresponding to keys appearing along $R$, and thus $v$ awakens in at most $O(\log \Delta)$ rounds. This provides us with the complexity of our algorithm in the following theorem.

\begin{thm}
Any {\bf 1-O-LOCAL} problem can be solved in $O(\log \Delta + \log^*n)$ deterministic awake-complexity in the sleeping model.
\end{thm}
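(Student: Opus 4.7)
The plan is to assemble the two ingredients already set up in the section: Linial's $O(\Delta^2)$-coloring run in the standard $\mathcal{LOCAL}$ mode with every vertex awake, and the subsequent sleeping-phase scheme driven by the internal binary search tree. First I would account for the coloring phase, which contributes exactly $O(\log^* n)$ awake rounds per vertex. The internal construction of the balanced tree on $2q-1$ values (with $q = O(\Delta^2)$ a power of two) and the relabeling of each color to its matching leaf are purely local computations performed in parallel with no communication, and thus add no awake rounds.

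For the sleeping phase I would invoke the preceding lemma, which already carries the correctness argument: when round $r' = r_1$ mapped to $v$'s recolored color arrives, $v$ has in hand the decision of every outgoing (smaller-color) neighbor, delivered at some earlier common-ancestor round along $v$'s root-to-leaf path $R$. By the definition of \textbf{1-O-LOCAL}, the solution $s(v)$ is a function of exactly those decisions, so $v$ computes $s(v)$ locally in round $r'$. The symmetric direction, that $v$ then delivers $s(v)$ to each larger-color neighbor $u$ in the later common-ancestor round of their two leaves, uses the same binary-tree geometry and the observation that such a round lies on both $v$'s and $u$'s paths. Counting awake rounds, $v$ is active only at values along $R$, whose length is $\lceil \log_2(2q-1)\rceil = O(\log \Delta)$; adding the coloring phase yields the claimed $O(\log \Delta + \log^* n)$ bound.

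The main obstacle, I expect, is not any new structural insight but a careful verification of the hypotheses of the \textbf{1-O-LOCAL} definition. Specifically, I need to confirm that the orientation induced by the $O(\Delta^2)$-coloring — edges directed toward smaller colors — is acyclic, which is immediate since any oriented cycle would require a strictly decreasing sequence of colors returning to its start; and that the mapping from colors to leaves is injective and color-preserving, so that distinct neighbors of $v$ occupy distinct leaves and hence their rounds $s_i$ interact with $r'$ exactly as described in the lemma. Once these routine checks are in place, the theorem follows by summing the two awake-round contributions.
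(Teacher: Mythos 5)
Your proposal matches the paper's argument: Linial's coloring contributes the $O(\log^* n)$ term, the internal binary tree and color-to-leaf remapping are communication-free, the preceding lemma supplies correctness of the decision at round $r'$, and the $O(\log\Delta)$ bound follows from the root-to-leaf path length. The routine checks you flag (acyclicity of the color-induced orientation, injectivity of the leaf mapping) are exactly the ones the paper also notes in passing, so this is essentially the same proof.
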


Note that each vertex is able to accumulate all information received from outgoing neighbors and pass it later to incoming neighbors, when these neighbors ask it for its output. Consequently, each vertex learns the information from all vertices that emanate from it in the orientation. Thus, each vertex is able to produce an output not only as a function of its outgoing-neighbors outputs, but as a function of all output of vertices that emanate from it. In other words, any {\bf O-LOCAL} problem can be solved this way. Hence, we obtain the following corollary.

\begin{col}
Any {\bf O-LOCAL} problem can be solved in $O(\log \Delta + \log^*n)$ deterministic awake-complexity in the sleeping model.
\end{col}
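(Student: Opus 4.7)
The plan is to reuse exactly the same scheduling machinery as in the preceding theorem for \textbf{1-O-LOCAL}, and only modify the content of the messages each vertex sends. Since \textbf{O-LOCAL} permits the output of $v$ to depend on the outputs of \emph{all} vertices reachable from $v$ via consistently outgoing paths, the only thing that needs to change is that each vertex must accumulate and forward the full transitive information along the orientation, rather than only its own local output. Crucially, we are working in the sleeping $\mathcal{LOCAL}$ setting, so message sizes are unbounded and we can piggyback arbitrary amounts of information on a single message.

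First I would invoke the $O(\Delta^2)$-coloring of Linial in $O(\log^* n)$ rounds and construct in each vertex's local memory the same binary tree on $2q-1$ nodes, with the same mapping of colors to leaves, as in the proof of the 1-O-LOCAL theorem. Each vertex $v$ awakens exactly in the rounds labeling the nodes on the root-to-leaf path of its own color; this already gives the $O(\log \Delta)$ awake-round bound. In particular, for any outgoing neighbor $u$ of $v$, the analysis of the previous lemma guarantees that there is a round (the lowest common ancestor of the leaves of $u$'s color and $v$'s color) at which both are awake and at which $u$ has already committed to its output.

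The only change is the payload: when $u$ transmits to $v$ at the common-ancestor round, $u$ sends not just its own output $s(u)$ but the entire set $\{s(w) : w \in U_u\}$ of outputs collected from vertices reachable from $u$ by consistently oriented paths. By induction on the rank of a vertex's color in the orientation, once $u$ has reached its decision round it has already assembled $\{s(w) : w \in U_u\}$ from its own outgoing neighbors, so it can indeed forward this set. When $v$ reaches its own decision round $r'$, it has received $\{s(w) : w \in U_u\}$ from every outgoing neighbor $u$, hence knows $\{s(w) : w \in U_v\} \setminus \{s(v)\} = \bigcup_{u \text{ outgoing from } v}\bigl(\{s(u)\} \cup \{s(w) : w \in U_u\}\bigr)$, which is exactly the information needed by Definition~\ref{def:1OLOCAL} to compute $s(v)$ in the \textbf{O-LOCAL} sense. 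The vertex then outputs $s(v)$ and stores the augmented set so it can in turn forward it to its own incoming neighbors at their respective common-ancestor rounds.

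The awake-round count per vertex is unchanged: $v$ is still awake only in the $O(\log \Delta)$ rounds on its root-to-leaf path, plus the $O(\log^* n)$ rounds of the initial coloring. The main subtlety, and the only step that requires care, is ensuring the induction is compatible with the scheduling, that is, that $u$ really has finished collecting its full transitive set \emph{before} its common-ancestor round with $v$. This follows from exactly the same tree-ancestor argument used in the 1-O-LOCAL proof applied to the sub-orientation rooted at $u$: every vertex reachable from $u$ has a smaller color than $u$, so by the inductive hypothesis all of their outputs have arrived at $u$ via common-ancestor rounds strictly preceding $u$'s own decision round, and therefore strictly preceding the common-ancestor round of $u$ and $v$. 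Hence the procedure is well-defined and completes within $O(\log \Delta + \log^* n)$ awake rounds, establishing the corollary.
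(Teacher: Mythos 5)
Your proposal is correct and follows essentially the same route as the paper: the paper's own justification is precisely that each vertex accumulates the information received from its outgoing neighbors and forwards it to incoming neighbors at the already-scheduled common-ancestor rounds, so the awake-round count is unchanged. Your version merely makes explicit the induction on colors and the unbounded-message-size assumption, both of which are consistent with the paper's argument.
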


\section{Lower Number of Clock Rounds}

In the $\mathcal{LOCAL}$ model any decidable problem can be solved in $O(Diam(G))$ clock rounds by spanning a tree from the vertex with the smallest ID and using broadcast and convergecast actions. This is a known fact. Also, some decidable problems are known to have a lower bound of $\Omega(n)$ rounds, for example 2-coloring of a path \cite{L87}. One may see that the method we devised in the DLT is beneficial in the sleeping model, but notice that each connection phase requires activation of vertices over $O(n^2)$ clock rounds.  This is because each vertex waits for the round which equals to its ID, which is a vector of size 2, and each of its coordinates is  of range $O(n)$. 

In this section we show that a connection phase can be done more efficiently in respect to the number of clock rounds. This shows that our method does not require the entire graph to wait for a large number of clock rounds to complete when one observes all the vertices as a whole, as one would do in the $\mathcal{LOCAL}$ model. This comes at an additional small price of awake times of vertices. More specifically, we show that each connection phase takes $O(n \log^* n)$ clock rounds while each vertex is awake for $O(\log^* n)$ rounds during each phase. This is in contrast to our original method in which the connection phase takes $O(n^2)$ clock rounds while each vertex is awake for $O(1)$ rounds in each phase. 

The main idea of this algorithm is to form trees of a bounded height in each connection phase. This is in contrast to our original algorithm, where a tree obtained from merging several DLTs may be arbitrarily high, which increases the waiting time for each vertex until it awakes. In the current version, however, we partition the set of trees that would like to join a common component into smaller components with bounded height. This limits the waiting time for a vertex until it needs to awake. Nevertheless, each such component still contains at least two DLTs of the previous phase. Thus, the number of DLTs at least halves with each phase, and the algorithm completes within $O(\log n)$ connection phases. Moreover, when each tree of the previous phase is considered as a vertex in a component of the current phase, the height of the component is guaranteed to be bounded by a constant. Consequently, the merging of trees in the component requires just $O(n \log^*n)$ clock rounds, rather than $O(n^2)$ as in our original algorithm. In order to partition a component of large height into components of bounded height we employ a 3-coloring algorithm of the component tree, and then perform partitioning and merging using these colors. 

We start by assuming that at the beginning of each connection phase, in each connected component $C$, for each vertex $v$, the ID vector is composed as $<ID(r_C), l>$, where $r_C$ is the root of $C$ (we remind that $C$ is a DLT at the start of a connection phase) and $l$ is the distance of $v$ from $r_C$. 
We will end our connection phase with each newly formed DLT having this property. Then, indeed, a broadcast and convergecast actions in such a DLT takes $O(n)$ clock rounds. For a broadcast, each vertex awakes in rounds $l-1$ and $l$. For convergecast each vertex awakes in rounds $n-l$ and $n-l-1$. The above property is trivial at the beginning of our spanning algorithm as each vertex $v \in G$ is its own DLT. It simply constructs its ID as $<ID(v), 0>$. We now describe a single connection phase, again, starting with the above property assumed and ending it with the above property achieved. 

We focus on stage 1 of the connection phase as described in Subsection \ref{subsec:stage1}. Stage 2 remains unchanged. We start with each existing DLT choosing a parent DLT as described in that subsection. The result is a new connected component $T$ which is a tree of connected trees. That is, $T$ can be viewed as a directed tree of a set of vertices $V(T)$, such that each vertex in $V(T)$ is a DLT. Let us look at the graph $H$ where we refer to each directed tree in $V(T)$ as a vertex. That is $H = (V', E')$ where there is a vertex in $V'$ for a corresponding DLT in $T$. The ID of a vertex in $H$ equals the label of its corresponding DLT. There is an edge $(v,u) \in E'$ if $v, u$ correspond to connected DLTs in $T$. The edge $(v, u)$ is directed in the same direction as the edge connecting the corresponding DLTs in $T$. Thus, $H$ is a directed tree just as $T$ is a directed tree of DLTs. Each ID in $H$ is of the range $O(n^2)$. We give some notation to simplify the discussion ahead. We denote $T_i$ each of the DLTs which make the component $T$. For each $T_i$, we denote $v(T_i)$ the corresponding vertex in $H$. Vice versa, we denote $T(v)$ the DLT in $T$ which corresponds to a vertex $v \in H$. 

We 3-color the directed tree $H$ using the algorithm of Goldberg, Plotkin and Shannon \cite{GPS88}. During this algorithm, after each step of the coloring algorithm, we pause such that each $T(v)$ can convergecast and broadcast internally. This way, all vertices in $T(v)$ have the knowledge that $v$ should have for the continuation of the coloring algorithm in $H$. Since our assumption allows each DLT $T(v)$ to perform the broadcast and convergecast actions in $O(n)$ clock rounds, the overall number of clock rounds required to finish coloring $H$ is $O(n \log^*n)$. Each vertex in $G$ has to be awake for at most $O(1)$ rounds for spreading information between each step in the coloring algorithm. So each vertex is awake for at most $O(\log^*n)$ rounds during the coloring of $H$. 

Now our goal is to break $H$ into a forest of trees each of depth at most 3. The steps of this procedure are described below. This procedure allows us to obtain the desired structure within a sufficiently small number of clock rounds. To keep track of the number of clock rounds that pass and the number of awake rounds for each vertex, we will analyze these complexity measures in each step of the following procedure that consists of four steps. 

\begin{enumerate}
    \item Each vertex $v$ of the color 1 chooses its parent $u$ in $H$ as a parent and sends a message to $u$ that it was chosen. For this communication we awake all vertices in $G$ for a single awakening round so that both endpoints of the edge connecting $T(v)$ and $T(u)$ can communicate. $v$ and $u$ mark themselves as "connected". (In $T$ this means a convergecast and broadcast actions inside $T(v)$ and $T(u)$). This takes at most $O(n)$ clock rounds and at most $O(1)$ awake rounds for each vertex.
    
    \item Each vertex $v$ of the color 2 which is not yet marked connected, and that has a parent that is not marked "connected", chooses its parent in $H$ as a parent. Denote that parent as $u$. Again, $v$ notifies $u$ that it was chosen and both are marked "connected". This takes at most $O(n)$ clock rounds and at most $O(1)$ awake rounds for each vertex.
    
    \item Repeat the same as in step 2 for the color 3.
    
    \item Let $v$ be a vertex in $H$ which is not marked connected. Then $v$ connects to its parent in $H$.
\end{enumerate}

 We note  that the above process takes at most $O(n)$ clock rounds to finish and adds $O(1)$ awake rounds to each vertex in $G$. So far, the total number of clock rounds is at most $O(n \log^*n)$ and the awake time is at most $O(\log^*n)$.
Let us denote the forest created as $H_1, \dots, H_k$. We prove the following lemma.

\begin{lem}
Each $H_i$ is of depth at most 3.
\end{lem}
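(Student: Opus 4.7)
The plan is to consider an arbitrary root-to-leaf path $v_0, v_1, \ldots, v_\ell$ in $H_i$, where $v_0$ is the unique vertex whose outgoing edge in $H$ is absent from the component (its initiation to its $H$-parent was never triggered), and to record, for each $j \geq 1$, the step $s_j \in \{1, 2, 3, 4\}$ in which $v_j$ chose $v_{j-1}$ as its parent. Using the marking rules of the four-step procedure together with the fact that $H$ is properly 3-colored, I would bound $\ell$ by showing that the sequence $s_1, s_2, \ldots$ must be very short.

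First I would establish a monotonicity principle: whenever $s_j \geq 2$, $v_j$'s initiation requires $v_j$ to be unmarked at the end of step $s_j - 1$, while $v_{j+1}$'s initiation marks $v_j$ at step $s_{j+1}$, forcing $s_{j+1} \geq s_j$. Next I would handle step-1 labels separately: $s_j = 1$ means $v_j$ has color 1 and marks $v_{j-1}$ at step 1, so $v_{j-1}$ could only initiate in step 1, but its color differs from $v_j$'s and step-1 initiation is restricted to color-1 vertices; hence $v_{j-1}$ never initiates, meaning $v_{j-1} = v_0$ and $j = 1$. Two consecutive labels in $\{2, 3\}$ are also ruled out: equality violates proper coloring, and the only monotone alternative $s_j = 2,\ s_{j+1} = 3$ fails because $v_j$'s own initiation marks $v_j$ at step~2, contradicting the step-3 precondition that $v_{j+1}$'s parent be unmarked at the end of step~2. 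Consequently, once $s_j \in \{1, 2, 3\}$, the next label $s_{j+1}$ (if defined) must equal $4$.

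The main obstacle is to rule out two consecutive step-4 labels. Suppose $s_j = s_{j+1} = 4$; then both $v_j$ and $v_{j+1}$ are unmarked at the end of step~3 and hence also at the ends of steps~1 and 2. Since $v_{j+1}$ has color 2 or 3, the precondition for its step-2 or step-3 initiation, respectively, is automatically satisfied (both $v_{j+1}$ and its parent $v_j$ are unmarked at the required time), so $v_{j+1}$ would in fact initiate strictly before step~4, contradicting $s_{j+1} = 4$. Combining this with the conclusions of the previous paragraph, the sequence $s_1, s_2, \ldots$ contains at most one label from $\{1, 2, 3\}$ followed by at most one label equal to $4$; therefore $\ell \leq 2$, which is in particular at most $3$, as claimed.
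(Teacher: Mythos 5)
Your proof is correct and reaches the same bound, but it is organized quite differently from the paper's. The paper argues by contradiction on a four-vertex chain $v,u,w,t$: if $v$ chose $u$ in steps 1--3, then $u$ was unmarked at that moment, is marked immediately afterwards, and (having a color different from $v$'s) never gets a later opportunity to choose $w$; hence every component formed before step 4 is a star, and step 4 --- whose initiators the paper merely asserts to form an independent set --- adds at most one further level. You instead classify the step labels $s_1,s_2,\dots$ along an arbitrary root-to-leaf path and force the sequence to have length at most $2$ via a monotonicity principle plus pairwise exclusions. The ingredients are identical (the ``unmarked'' preconditions of steps 2--4, one color class per step, proper coloring of $H$), but your route buys two things: the ``no two consecutive $4$'s'' case is an explicit proof of the independence assertion that the paper leaves unproved, and the monotonicity principle cleanly handles the subcase, implicit in the paper, in which $u$'s color class acts before $v$'s. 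One small repair: your exclusion of consecutive labels is stated only for pairs in $\{2,3\}$, yet to conclude that $s_j\in\{1,2,3\}$ forces $s_{j+1}=4$ you also need to rule out $s_j=1,\ s_{j+1}\in\{2,3\}$; the identical argument applies ($v_j$ is marked by its own step-1 initiation, violating the parent-unmarked precondition of steps 2 and 3), so this is a one-line addition rather than a gap. Overall the paper's version is shorter and gives the structural picture (stars plus one pendant level), while yours is more mechanical and easier to verify case by case.
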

\begin{proof}
For any $H_i$ to be of depth greater than 3, there must be a vertex in $H_i$ with a grand-grandchild. Let $v,u,w,t$ be the vertices where $v$ is the child of $u$, $u$ is the child of $w$ and $w$ is the child of $t$ (thus, $v$ is a grand-grandchild of $t$). Let us assume that $v$ chose $u$ before Step 4. We show that $v$ cannot even be a grandchild of $w$ let alone a grand-grandchild of $t$. Since $v$ chose $u$ before step 4, $u$ itself could not have been marked "connected" at that point. Also, since the colors of $v$ and $u$ are different in the 3-coloring of $H$, $u$ did not choose $w$ as a parent simultaneously when $v$ chose $u$ as a parent. Thus, when it came the turn for $u$ to act, $u$ is already marked "connected" and cannot have chosen $w$ as a parent. Nor would it have done so in step 4 since only unconnected vertices act in that step. Especially, $v$ cannot be the grand-grandchild of $t$. \\
According to what we have just shown, until step 4, all trees are of depth at most 2. The set of unconnected vertices in step 4 is independent and can only connect to a tree constructed in steps 1-3. Thus, the depth of each such tree can grow by at most 1 in step 4 and reach the depth of at most 3 as required to prove.
\end{proof}

\noindent Note that since in step 4 we guarantee that unconnected vertices become connected, it is clear that the depth of each $H_i$ is at least 2 and we at least halve the number of connected components from the start of the connection phase. 

Next, we return to considering $C$, which is now partitioned into subtrees of DLTs. a DLT $T_i$ remains connected to its parent $T_j$ only if there is a tree $H_i$ in which $v(T_i)$ is a child of $V(T_j)$. We thus partitioned $C$ into smaller components. Let $H_i$ be such a tree. Each DLT $T_i$ can be in one of three states: Either $v(T_i)$ is the root in $H_i$ and we denote $T_i$ as $T_{root}$; or $v(T_i)$ is a leaf and we denote it as $T_{leaf}$; or $v(T_i)$ is exactly between a root and a leaf in $H_i$, in which case we denote it as $T_{middle}$.

\begin{figure}
    \centering
    \includegraphics[height=3.375in, width=4in]{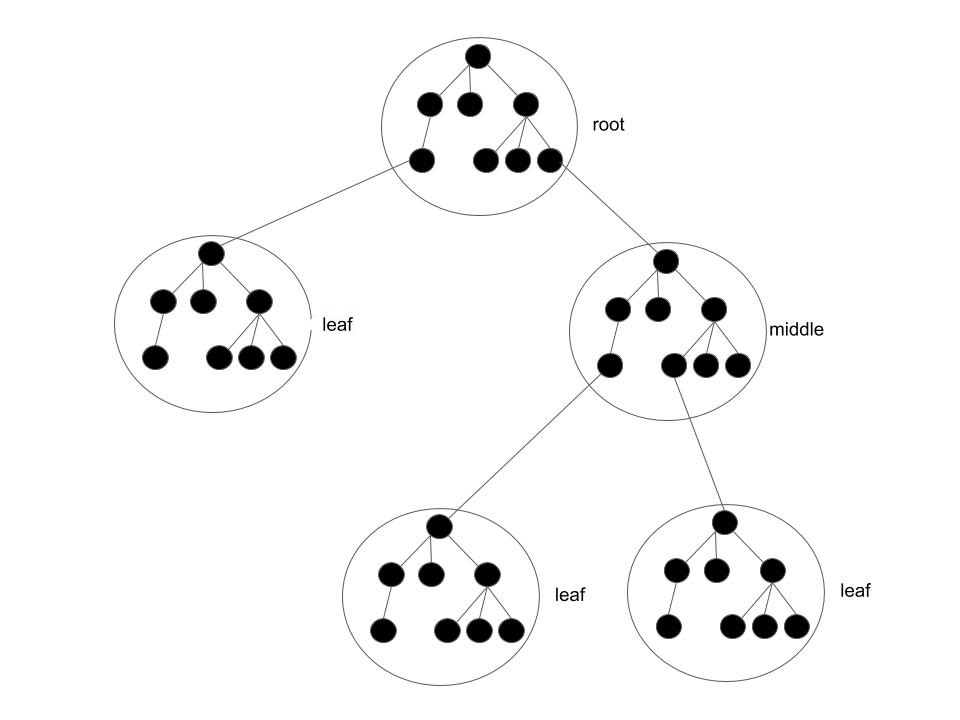}
    \caption{DLTs labeling showing a 3-depth connection.}
    \label{fig:DLTs}
\end{figure}

\noindent We can now finally change the IDs in each $T_i$ to achieve the property with which we began the connection phase. W.L.O.G. we assume that $H_i$ is of depth 3. The process is as follows:

\begin{enumerate}
    \item $T_{root}$ remains unchanged. We wake up the vertices of $T_{root}$ and all vertices in the DLTs marked $T_{middle}$ for one round. $T_{root}$ sends the ID of $r_{T_{root}}$ and the height of $(T_{root})$ to the root of each $T_{middle}$. We remind that all vertices of $T_{root}$ have this knowledge internally. All vertices sleep again.
    
    \item In each $T_{middle}$ the root calculates internally new IDs for the vertices in $T_{middle}$ of the form $<ID(r_{T_{root}}), l>$ where $l$ is the distance of the vertex from $r_{T_{root}}$ which can be calculated because the root knows its distance from $r_{T_{root}}$. We perform a broadcast in $T_{middle}$ in $O(n)$ clock rounds and $O(1)$ awake rounds. All vertices sleep again. Note that now all vertices in $T_{middle}$ know the ID of the $r_{T_{root}}$ and their distance from it.
    
    \item We repeat the above two steps where $T_{middle}$ transmits to all its children, those marked as $T_{leaf}$ and the roots of these leaves reassign labels internally in their respective trees (similar to what is described in step 2). 
\end{enumerate}

\noindent Overall, the above process takes another $O(n)$ clock rounds and additional $O(1)$ awake rounds for each vertex in $G$. When the process terminates, each label of each vertex in our newly formed DLT, denote $L$, is of the form $<ID(r_{L}), l>$ where $l$ is the distance of the vertex from $r_{L}$, which is exactly the property we wanted to preserve and which allows us to move on to the next connection phase.

The number of connection phases needed remains $O(\log n)$, since we guarantee that each DLT is connected to at least one other DLT in each phase. Each connection phase performs the coloring process in which each vertex is awake for at most $O(\log^*n)$ stages, each can take up to $O(n)$ clock rounds due to the dissemination of knowledge inside each connected component that functions as a vertex in $H$. Thus, the total number of clock rounds is $O(n \log n \log^*n)$ instead of $O(n^2 \log n)$ for finding a DLT for the input graph $G$. The awake complexity of the sleeping model is $O(\log n \log^* n)$ instead of $O(\log n)$, since each dissemination requires each vertex in a connected component for only $O(1)$ awake rounds as shown in Lemma \ref{lem:DLTprop}. Thus, we improved the number of clock rounds significantly while paying a small price in awake complexity. The final result is a DLT for $G$ where each vertex has knowledge of its distance from the root. Thus, we can perform the broadcast and convergecast procedures using only $O(n)$ clock rounds which is optimal.

\begin{thm}
There is a deterministic algorithm for the DLT problem with $O(\log n \log^*n)$ worst-case awake time which terminates within $O(n \log n \log^*n)$ clock rounds.
\end{thm}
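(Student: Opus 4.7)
The plan is to prove the theorem by executing $O(\log n)$ modified connection phases, each of which halves the number of DLTs and maintains the following invariant: at the start of every phase, for each DLT $C$ with root $r_C$, every vertex $v \in C$ carries the label $\langle ID(r_C), \mathrm{dist}_C(r_C, v) \rangle$. This invariant is the crux of the entire improvement, since under it any broadcast or convergecast inside a DLT $C$ costs only $O(n)$ clock rounds (each $v$ wakes in two rounds determined by its distance coordinate, which lies in $[0, n-1]$) while still costing only $O(1)$ awake rounds per vertex, via Lemma~\ref{lem:DLTprop}.

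Within a single phase I would first execute Stage 1 of Algorithm~\ref{alg:connection} to produce the auxiliary directed tree $H$ whose vertices correspond to the DLTs of the previous phase. Next, I would apply the $3$-coloring algorithm of Goldberg--Plotkin--Shannon~\cite{GPS88} to $H$; each of its $O(\log^* n)$ stages requires the ``vertices'' of $H$ (i.e.\ entire DLTs) to communicate with their $H$-parents, which I would implement by one inter-DLT wake-up round followed by a convergecast/broadcast inside each participating DLT. By the invariant, each such intra-DLT phase costs $O(n)$ clock rounds and $O(1)$ awake rounds per vertex, so the coloring in total costs $O(n \log^* n)$ clock rounds and $O(\log^* n)$ awake rounds per vertex. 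Then I would run the four-step procedure described in the text (each color claiming its unconnected parent, with an arbitrary catch-up in Step~4) to cut $H$ into a forest $H_1, \ldots, H_k$ and invoke the already-proven lemma that each $H_i$ has depth at most $3$. Finally I would relabel each new DLT $L$ so that every vertex again holds $\langle ID(r_L), \mathrm{dist}_L(r_L, v) \rangle$: the $T_\text{root}$ DLT keeps its labels, and labels are propagated downward in three rounds of $T_\text{middle}$ and $T_\text{leaf}$ internal broadcasts, each costing $O(n)$ clock rounds and $O(1)$ awake rounds per vertex as ensured by the invariant holding on the \emph{previous} DLTs being merged.

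For the final accounting I would argue that the number of connection phases is still $O(\log n)$ because Step~4 guarantees that every DLT is joined to at least one other DLT, hence the DLT count at least halves per phase, just as in Section~\ref{sec:conPhase}. Multiplying the per-phase costs gives an awake complexity of $O(\log n \cdot \log^* n)$ and a clock complexity of $O(n \log n \log^* n)$, which is the bound claimed in the theorem. Correctness of the resulting DLT is inherited from the unmodified first and second connection stages, since the only change is a bounded-depth repartitioning of an intermediate tree of DLTs, followed by a relabeling that by construction preserves the two defining properties of a DLT.

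The main obstacle I expect is not the asymptotic counting but the careful bookkeeping of the label invariant across the relabeling step: one must verify that after the three-level downward propagation each vertex's distance coordinate truly equals its distance from the new root $r_L$, that distinct DLTs receive distinct first coordinates (so the DLT-label comparisons of Section~\ref{subsec:stage1} continue to make sense in the next phase), and that the inter-DLT communication used during the $3$-coloring of $H$ does not inflate any vertex's awake count beyond $O(1)$ per coloring stage. Once this invariant is shown to be reestablished at the end of each phase, the per-phase complexity bounds follow directly from Lemma~\ref{lem:DLTprop} applied with $n' = n$ rather than $n' = \Theta(n^2)$, which is precisely where the clock-round savings come from.
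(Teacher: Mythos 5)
Your proposal follows essentially the same route as the paper: the distance-based label invariant $\langle ID(r_C), \mathrm{dist}_C(r_C,v)\rangle$, the auxiliary tree $H$ of DLTs, the Goldberg--Plotkin--Shannon $3$-coloring with intra-DLT dissemination between stages, the decomposition of $H$ into depth-$3$ subtrees, the three-level relabeling via $T_{\text{root}}$, $T_{\text{middle}}$, $T_{\text{leaf}}$, and the same per-phase accounting yielding $O(\log n \log^* n)$ awake and $O(n \log n \log^* n)$ clock rounds. The argument and its bookkeeping match the paper's.
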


\section{Sleeping in the $\mathcal{CONGEST}$ Model}

In this section we show that our construction of a DLT on $G$ can be achieved also in the $\mathcal{CONGEST}$ model. This is of great importance to some well-studied problems in which the exchanged information is of restricted size. We define a class of problems {\bf C-CONGEST} as follows.

\begin{dfn}  \label{def:cCongest}
The class of problems {\bf C-CONGEST}, or the {\em Congested Combinations Class}, is the class of problems which have solutions with the following properties:
\begin{enumerate}
    \item The solution can be expressed by using up to $O(\log n)$ bits.
    \item Given two solutions on two subgraphs $R_1$ and $R_2$, one can compute the solution on the graph $R_1 \cup R_2$ using some sequential algorithm without incurring further communication.
\end{enumerate}
\end{dfn}

\noindent Note that the fact that the solutions can be expressed as in the definition does not mean the problem is easy to compute in the distributed setting. It may well be possible that the problem has a lower bound of $ \omega(\log n)$ and thus has a solution less efficient than constructing a DLT. Examples of problems in {\bf C-CONGEST} are leader election, computing exact number of edges and average degree.

We thus need to show that such a DLT is possible to compute in the $\mathcal{CONGEST}$ model. We will do so by going over the steps of our DLT algorithm as we did in Lemma \ref{lem:conPhaseTime} and show that each step can be performed in a congested network. As we will see, the steps that need the most attention are the steps where a DLT reassigns the labels internally to re-orient the edges (steps 3 and 5 in Lemma \ref{lem:conPhaseTime}). To this end, we devise a sub-procedure that allows this reassignment to occur in the sleeping model, while using messages of size at most $O(\log n)$ bits. We remind that the assignment of the labels is done by defining a new root and re-orienting the edges of the DLT accordingly. For a DLT $T$, The labels are of the form of $\langle L(T), l \rangle$, where $l$ is the distance of the vertex from the new root of $T$. \\

\noindent {\bf Reassigning Labels in the DLT in a Congested Network.} In the version suggested in Section \ref{sec:DLTmain} we aggregate all the knowledge to the root of the DLT, locally computing an assignment and broadcasting it to the vertices of the DLT. This requires messages of large size. Instead, we only aggregate the distance from the new root. This is done as follows. \\
Let $r$ be the root of the DLT $T$ and let $v$ be the new root from which we wish to reassign the labels. Let $u_1,\dots,u_k$ be the vertices in $T$ on the path from $v$ to $r$ (going up the tree). We start by performing a broadcast where only the vertices $r, v, u_1, \dots, u_k$ take part. When it is time for $v$ to be active, it sends a message to $u_1$, its parent, that it is the new root. Thus, when it is time for $u_1$ to send a message to its parent, it sends a message to $u_2$ that it is in distance 1 from the new root. We continue this where each $u_{i}$ sends its own distance from $v$ to $u_{i+1}$. Note that the distance is bounded by $n$, thus we use only $O(\log n)$ bits for each message. See Fig. \ref{fig:distOnPath} for an example. This discussion is summarized in the next lemma.

\begin{lem}   \label{lem:boundMSGsizeDLT}
The reassignment of labels in a connection phase in a DLT can be done using messages of size $O(\log n)$ bits.
\end{lem}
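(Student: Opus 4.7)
The plan is to verify that every message transmitted during the label-reassignment subroutine fits in $O(\log n)$ bits, by showing that the only nontrivial per-vertex datum that actually travels over the network is a distance, which is trivially bounded by $n$. I would first pin down exactly what the reassignment must produce: for each vertex $w$ in the DLT $T$ the new label is $\langle L(T), l(w)\rangle$, where $L(T)$ is the (unchanged) tree-label already stored at every vertex, and $l(w)$ is the distance from $w$ to the new root $v$ in the re-rooted tree. Hence the only quantity a vertex needs to learn from its neighbors is a single integer in $\{0, 1, \dots, n-1\}$, which fits in $O(\log n)$ bits.

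Next I would walk through the procedure sketched above on the backbone path $v = u_0, u_1, \dots, u_k = r$ from the new root up to the old root. Using the existing upward schedule of the old DLT, each $u_i$ needs only to transmit to $u_{i+1}$ the integer $i$, which is its distance from $v$; this is a single $O(\log n)$-bit message per edge of the path. Because the DLT awake schedule depends only on labels, not on the contents of messages, restricting participation to the $k+1$ path vertices creates no congestion issues and uses the already-proved broadcast/convergecast machinery of Lemma~\ref{lem:DLTprop}.

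I would then extend the argument to vertices outside the backbone. Any such vertex $w$ hangs in some old subtree attached to a path vertex $u_i$ (or to $r$) under the old orientation. Once $u_i$ knows its new depth $l(u_i) = i$, a downward pass through its subtree, which is the standard DLT broadcast of Lemma~\ref{lem:DLTprop} restricted to that subtree, suffices: every vertex forwards to its children the value it just received from its new parent, incremented by $1$. Each such message is again a single integer bounded by $n$, and therefore fits in $O(\log n)$ bits. The symmetric step is applied at $r$ to the subtree dangling below $r$ in the old orientation, using $l(r) = k$ as the starting depth.

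The subtlety I expect to be the main obstacle is that this dissemination must be executed while the old labels still dictate the awake schedule and the old orientation still determines parent-child roles, so the two passes must be sequenced carefully to reuse the existing DLT schedule without collisions; I would handle this by cleanly separating the upward path pass from the downward subtree pass and invoking Lemma~\ref{lem:DLTprop} for each of them. Because both passes only ever transmit one distance integer per message, the entire reassignment respects the $O(\log n)$-bit bound of the $\mathcal{CONGEST}$ model, which is exactly the claim of the lemma.
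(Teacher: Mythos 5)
Your proposal is correct and follows essentially the same route as the paper: an upward pass along the path from the new root $v$ to the old root $r$ in which each path vertex learns and forwards its distance from $v$, followed by a downward dissemination in which every vertex passes a single distance value (bounded by $n$, hence $O(\log n)$ bits) to its children. The only difference is presentational — you make explicit the incremented-depth forwarding into the off-path subtrees, which the paper leaves implicit in its "each vertex sends its distance from $v$ to all of its children" step.
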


\begin{figure}
    \centering
    \includegraphics[height=3.375in, width=4.7in]{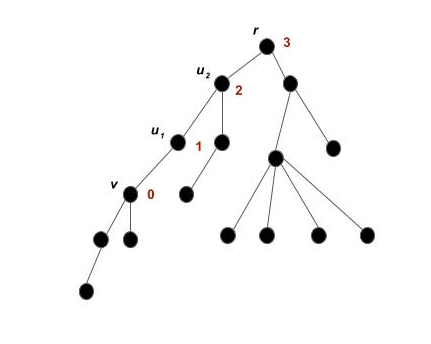}
    \caption{Distance from v along the path to the root}
    \label{fig:distOnPath}
\end{figure}

Now we move on to perform a propagation of the new label assignment in $T$. Each vertex in its turn simply sends its distance from $v$ to all of its children. Note that at the start of this stage, each vertex in the set $r, v, u_1, \dots, u_k$ knows its distance from $v$ and uses that distance in the message. Since, again, we send distances that are bound by $n$ we only use $O(\log n)$ bits for each message. See Figure \ref{fig:distFromV}. We conclude this section with the following theorem.

\begin{figure}
    \centering
    \includegraphics[height=3.375in, width=5.5in]{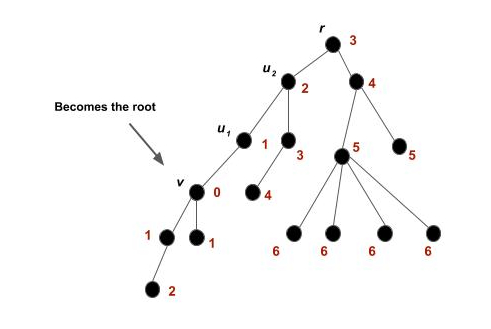}
    \caption{Each vertex knows its distance from v.}
    \label{fig:distFromV}
\end{figure}

\begin{thm} \label{thm:congestDLT}
The construction of a DLT can be done in a congested network in $O(\log n)$ deterministic worst-case awake time in the sleeping model.
\end{thm}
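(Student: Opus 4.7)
The plan is to revisit each of the six ingredients of a connection phase that were enumerated in Lemma \ref{lem:conPhaseTime} and verify that each one can be implemented using messages of $O(\log n)$ bits, without increasing the number of awake rounds per vertex. Since Theorem \ref{thm:DLTrunningTime} already gives $O(\log n)$ awake rounds once the per-phase cost is $O(1)$, it suffices to check that no step requires sending a message whose size grows with the size of a subtree, the cumulative history, or the number of edges a vertex has aggregated so far. I would organize the argument as a walk through the Connection algorithm, showing that all transmitted content falls into a bounded collection of fields, each of which is either a vertex label $\langle L(\mathcal{T}), l\rangle$, a single ID, a single edge endpoint, a distance, or a constant-sized flag.

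First, for the edge-selection step in stage one, each vertex needs only to know which of its cross-tree neighbors has the minimum first coordinate. A convergecast that propagates a single candidate edge $(u,w)$ together with the first coordinate of $L(w)$ suffices; at every internal tree vertex the rule is "keep the tuple with the smallest first coordinate," so only one such tuple of size $O(\log n)$ bits ever travels over any link. The subsequent broadcast of the chosen edge by the root is just a single $O(\log n)$-bit message. Similarly, the global wake-up rounds in which neighbors exchange their current labels only require each endpoint to emit its own $\langle L(\mathcal{T}), l\rangle$, which is $O(\log n)$ bits. The convergecast used in stage two to verify whether a local-minimum tree was chosen by another tree can be encoded as a single bit (or a single leaf label) rather than a list, so it too fits into one $\mathcal{CONGEST}$ message.

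The main obstacle, and the only genuinely delicate step, is the internal label reassignment performed after a new root is chosen. The original description in Section \ref{sec:DLTmain} has the root learn the entire subtree structure, compute a BFS internally, and broadcast an individualized label to each vertex, which obviously cannot be done with $O(\log n)$-bit messages in a single pipeline. Here I would invoke Lemma \ref{lem:boundMSGsizeDLT} directly: instead of aggregating the topology at the old root and dispatching assignments, the new root $v$ transmits its identity up the spine toward the old root $r$, with each vertex $u_i$ on that spine appending only one integer, its own distance from $v$, which is bounded by $n$ and therefore representable in $O(\log n)$ bits. Once each spine vertex knows its distance from $v$, a standard downward propagation suffices: every vertex sends to each child only its own distance from $v$, and the child adds one. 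Since a DLT broadcast and convergecast each take $O(1)$ awake rounds per vertex (Lemma \ref{lem:DLTprop}), and the messages are all single distances or single labels, the reassignment adds only $O(1)$ awake rounds and uses $O(\log n)$-bit messages throughout.

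Putting the pieces together, every one of the steps in a single connection phase transmits at most $O(\log n)$ bits along any edge in any round, so the entire phase runs in the $\mathcal{CONGEST}$ model with $O(1)$ awake rounds per vertex. The halving argument of Section \ref{sec:DLTmain} is unaffected by the message-size restriction, since it depends only on the correctness of the merging step, not on the size of transmitted data. Hence the number of phases remains $O(\log n)$, and the overall deterministic awake complexity of constructing a DLT in the $\mathcal{CONGEST}$ model is $O(\log n)$, as claimed.
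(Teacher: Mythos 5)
Your proposal is correct and follows essentially the same route as the paper: a step-by-step walk through the six ingredients of the connection phase from Lemma \ref{lem:conPhaseTime}, with the convergecast/broadcast steps reduced to single $O(\log n)$-bit tuples and the label-reassignment steps handled by the spine-distance propagation of Lemma \ref{lem:boundMSGsizeDLT}. The additional detail you give on the "keep the minimum tuple" aggregation rule and on why the halving argument is unaffected matches the paper's intent exactly.
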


\begin{proof}
We show that each step of the algorithm can be executed using messages of $O(\log n)$ bits.
\begin{enumerate}
    \item {\em Each vertex sends the edge to the neighbor with the minimal label to the root of a DLT.} Each vertex passes to its parent the edges with the minimal label seen so far in the subtree including the edge it itself chose. Since each vertex aggregates exactly one edge, we communicate only $O(\log n)$ bits across each edge in the DLT.
    \item {\em The root chooses an edge and broadcasts it to all vertices in the DLT.} Again, each vertex propagates a single edge to its children in the DLT. We communicate only $O(\log n)$ bits across each edge in the DLT.
    \item {\em Each DLT, internally, reassigns the labels of its vertices.} According to Lemma \ref{lem:boundMSGsizeDLT} this requires messages of size at most $O(\log n)$ bits.
    \item {\em A local minima DLT chooses a component to connect to and then receives the label of that component.} The same as we did in step 1, this can also be done using only $O(\log n)$ bits across each edge.
    \item {\em A local minima DLT, internally, reassigns the labels of its vertices.} According to Lemma \ref{lem:boundMSGsizeDLT} this requires messages of size at most $O(\log n)$ bits.
    \item {\em All vertices in $G$ wake up for one round to learn the new labels of their 1-hop neighborhood.} Two labels are sent across each edge. This can be done using only $O(\log n)$ bits.
\end{enumerate}
\end{proof}

\section{Conclusion}

In this work we investigated the strength of Distributed Layered Trees in the sleeping model. We showed that the computation of such trees is complete and thus any decidable problem can be solved within $O(\log n)$ awake complexity. This raises the question of finding non-trivial sub-classes of decidable problems which one can solve in a more efficient way than using a DLT. We address this question by defining the {\bf O-LOCAL} class of problems and showing that it indeed can be solved more efficiently in the sleeping model. Since the $\mathcal{CONGEST}$ model is of great interest in the field of distributed networks, we investigated it as well, and obtained a class of problems that can be solved within logarithmic awake complexity by using only messages of logarithmic size. 
Another important aspect is the number of ordinary clock rounds of an algorithm with good awake complexity. While our simpler version of the algorithm has quadratic complexity of clock rounds, the more sophisticated variant gets closer to the optimal $\Theta(n)$ rounds.
Overall, we showed the strength of the sleeping model and the possibility of a significant energy conservation for distributed networks. \\

\noindent {\large \bf Acknowledgements}  The authors are grateful to the anonymous reviewers for helpful comments.



\bibliography{library}

\bibliographystyle{abbrv}

\end{document}